\newtheorem{theorem}{Theorem}[section]
\newtheorem{definition}[theorem]{Definition}
\newtheorem{lemma}[theorem]{Lemma}
\newtheorem{claim}[theorem]{Claim}
\newcommand{\farmersmarket}{{street fair}\xspace}
\newcommand{\mc}[1]{\mathcal{#1}}
\newcommand{\wh}{\widehat}
\newcommand{\p}{\partial}
\begin{document}

\title{Why is My Route Different Today? \\An Algorithm for Explaining Route Selection\footnote{A preliminary version of this paper appeared in the SIAM Conference on Applied and Computational Discrete Algorithms (ACDA) 2025. Code for the experiments can be found here: \url{https://github.com/google-research/google-research/tree/master/explainable_routing}}}

\author{Aaron Schild\thanks{Google Research, Mountain View, CA}
\and Sreenivas Gollapudi\thanks{Google Research, Mountain View, CA}
\and Anupam Gupta\thanks{New York University, New York, NY}
\and Kostas Kollias\thanks{Google Research, Mountain View, CA}
\and Ali Sinop\thanks{Google Research, Mountain View, CA}}

\maketitle

\begin{abstract}
Users of routing services like Apple Maps, Google Maps, and Waze frequently wonder why a given route is proposed. This question particularly arises when dynamic conditions like traffic and road closures cause unusual routes to be proposed. While many dynamic conditions may exist in a road network at any time, only a small fraction of those conditions are typically relevant to a given user's route. In this work, we introduce the concept of an \emph{simple valid explanation} (SVE), which consists of a small set of traffic-laden road segments that answer the following question: \emph{Which traffic conditions cause a particular shortest traffic-aware route to differ from the shortest traffic-free route?} We give an efficient algorithm for finding SVEs and show that they theoretically and experimentally lead to small and interpretable answers to the question.
\end{abstract}

\section{Introduction}\label{sec:intro}

Optimization and machine learning often result in solutions that are difficult to interpret. The machine learning and optimization communities have developed numerous techniques for explaining the behavior of machine learning models~\cite{ll17,wotm24}. These techniques are often insufficient in explaining the output of complex models like neural networks. In those applications, simpler models like decision trees must be used instead
~\cite{LundbergECDPNKH20,dfmr20}.
However, simple models can often lead to optimality loss on problems like clustering~\cite{gpsy23}.

In this paper, we 
study explainability in graph algorithms, particularly in route recommendations. 
People interact with the output of graph algorithms on a daily basis, particularly in the context of routing engines like Apple Maps, Google Maps, and Waze. However, it can be difficult to understand why these engines propose a particular route. People are much more adept at assessing the meaning of text, pictures, or video, for instance, than they are at assessing the optimality of a route given to them by a routing engine. 
Indeed, the optimality of a route can hinge on global information. This complexity manifests itself particularly when routes change. For example, a user may be given an unusual route through a small town due to the presence of a \farmersmarket that they were unaware of 20 blocks away. In this case, the user will not notice the \farmersmarket in their viewport. The user may choose to not follow the routing engine's path because of this, eventually leading them to a complicated drive around the \farmersmarket. To prevent this from happening, the routing engine should do the following:

\begin{enumerate}
    \item identify the relevance of the \farmersmarket to its unusual choice of route, and 
    \item present this information in a succinct and interpretable way to the user.
\end{enumerate}

In this work, we focus on the first part: \emph{identification}. The second aspect can be achieved either by simply listing out the names of streets returned from the identification step, or by using identification as a retrieval step in a retrieval-augmented-generation (RAG)-based large language model~\cite{lep20,ram23,setal24}.

A strawman solution for the  identification step is to report all obstructions along the default no-traffic route. 
While such an approach may work if the routing engine were returning the second-best path to the user, it would not work if several natural options were blocked. Furthermore, even in the case where just the best path were blocked, not all obstructions along that path may be relevant to the routing engine's choice of route. Thus, a more general and nuanced solution is desirable.

We instead approach the identification step by formulating it as an optimization problem. To do this, 
we first need to formalize the notion of an explanation. In the \farmersmarket example, the \farmersmarket is represented in a road network by a collection of road segments. Many road segments face 
unusual circumstances during the \farmersmarket---a few segments are closed, and many others have higher traffic than normal. Formally, this information is represented by changing weights of arcs in the road network, where arc weights represent travel times to cross the corresponding road segment. (The closed segments have infinite arc weight, while the others have higher arc weight than normal.) The crux of the identification problem is this: while \textit{many} segments will have higher weight than normal, the only weights that are relevant are the ones near the \farmersmarket.

\subsection{First Attempt At Explanations: Subselecting Traffic}

We now work towards a definition of ``explanation'' in our context. In the \farmersmarket example, the weights on the segments near the \farmersmarket are high enough to make it so that the shortest origin-destination path \textit{is} the path presented to the user. If the weights of the segments adjacent to the \farmersmarket were replaced with their free-flow values, the shortest origin-destination path would have instead gone through the \farmersmarket. Formally, let $G$ be the directed graph representing the road network. The arcs $e$ of $G$ each have two different weights: $\ell(e)$, the free-flow travel time required to traverse $e$, and $u(e)$, the travel time require to traverse $e$ with current traffic.\footnote{We assume that $u(e)$ is not a function of time, unlike traffic in real-world road networks, for simplicity. Time dependence can be modeled by considering a layered graph where each vertex represents a place-time pair, as is standard.} Naturally, it makes sense to require that $\ell(e) \le u(e)$ for all arcs $e$ in $G$, since traffic only increases travel times.\footnote{Not that such changes are impossible, but if they do happen, it is because of more significant structural changes in the underlying road network.} This arc weight model also captures road closures -- simply make $u(e)$ a very high number if $e$ is a closed road segment. Fix two vertices $s$ and $t$ -- the origin and destination for the desired path. Let $P$ the path returned to the user by the routing engine, routing the user from vertex $s$ to vertex $t$. In particular, $P$ is the shortest $s$-$t$ path in $G$ weighted by the weights $u(e)$. An \emph{explanation} of the path $P$ should be a set $X\subseteq E(G)$ of arcs for which if we keep the weight $u(e)$ for all $e\in X$ and decrease the weight for any $e\notin X$ down to $\ell(e)$, $P$ remains the shortest $s$-$t$ path in $G$.

There are many possible explanations $X$ of a path $P$ in general. In line with Occam's Razor, it is natural to ask for the \emph{simplest} explanation, i.e., an explanation $X$ for which $|X|$ is minimized. In the \farmersmarket example, the explanation should only include segments that are close to the street fair, as traffic on faraway segments is not relevant to the choice of $P$. We will demonstrate both experimentally and theoretically that the simplest explanation is a good one.

\subsection{Formulating an LP Relaxation for Explanations and Our Contributions}

Unfortunately, the explanations that we discussed in the previous section generalize the problem of shortest path interdiction and are thus NP-hard to find \cite{BGV89} and are hard to even approximately find assuming the Unique Games Conjecture \cite{L17}. The main conceptual contribution of our paper, the concept of a \emph{simple valid explanation}, is a carefully-chosen linear programming relaxation of this NP-hard optimization problem. 

We start by formulating the problem of finding the desired explanation as a mixed-integer linear program, which has variables $w_e$ for all $e\in E(G)$ and $d_v$ for all $v\in V(G)$:

\begin{align}
    \min \sum_{e \in E(G)} & \mathbf{1}[w_e \ne \ell(e)] && \tag{MIP} \label{MIP}\\
    w_e &\in \{\ell(e), u(e)\} && \forall e \in E(G) \label{mip:1}\\
    d_v - d_u &\leq w_e & &\forall e = (u,v) \in E(G) \label{mip:2} \\
    d_v - d_u &= w_e && \forall e = (u,v) \in P. \label{mip:3}
\end{align}

Constraints \ref{mip:2} and \ref{mip:3} enforce the constraint of $P$ remaining the shortest $s$-$t$ path. Constraint \ref{mip:1} ensures that each arc has one of the two allowed arc weights, while the objective counts the number of arcs for which traffic is used. For this section only, define $\tau(e) = 0$ when $u(e) = \ell(e)$ and $\tau(e) = \frac{1}{u(e) - \ell(e)}$ otherwise. We can rewrite each arc's $e$ contribution to the objective as follows when $u(e) \ne \ell(e)$\footnote{$e$ does not contribute to the objective when $u(e) = \ell(e)$.}:

$$\mathbf{1}[w_e \ne \ell(e)] = \frac{w_e - \ell(e)}{u(e) - \ell(e)} = \tau(e) (w_e - \ell(e))$$

This substitution leads to a natural relaxation:

\begin{align}
    \min \sum_{e \in E(G)} & \tau(e)(w_e - \ell(e)) && \tag{LP1} \label{LP1a}\\
    \ell(e) \le w_e &\le u(e) && \forall e \in E(G) \label{lp:1}\\
    d_v - d_u &\leq w_e & &\forall e = (u,v) \in E(G) \label{lp:2} \\
    d_v - d_u &= w_e && \forall e = (u,v) \in P. \label{lp:3}
\end{align}

Conceptually, we can state the constraints of this linear program, which we call \emph{sufficiency} and \emph{validity}, as follows:

\begin{enumerate}
    \item (Sufficiency, constraints \ref{lp:2} and \ref{lp:3}) $P$ is a shortest $s$-$t$ path in $G$ under the arc weights $w_e$.
    \item (Validity, constraint \ref{lp:1}) $\ell(e) \le w_e \le u(e)$ for all arcs $e$ in $G$.
\end{enumerate}

An optimizer $w$ of the linear program is called \emph{$\tau$-simple} due to its minimization of the objective. $w$ is called a \emph{($\tau$-)simple valid explanation (SVE)}. While $w$ can be computed using any linear program solver, such solvers are not efficient on continent-scale road networks, which contain hundreds of millions of arcs. Instead, in Section \ref{sec:algo}, we give a novel flow-based algorithm for solving the linear program. It works well for integral values of $\tau(e)$, so we use different choices of $\tau$ from the one used in this section in our experiments. This change of $\tau$ does not meaningfully impact our experimental results. In Section \ref{sec:algo}, we also discuss the considerations behind the choice of $\tau$.

One of the key contributions of this paper is showing that the simplicity criterion results in SVEs being small in practice. In Section \ref{sec:closures}, we define our evaluation setup and give some theoretical results validating the succinctness of SVEs. We give a broad set of examples in Section \ref{sec:examples} that illustrate how SVEs give intuitive explanations for routing decisions. In Section \ref{sec:experiments}, we present experimental results which show that SVEs are short in many settings. 

\subsection{Preliminaries and the Formal Definition of SVEs}

We are now ready to formally repeat the definition of SVEs given in the previous section. In the following, 
consider a directed graph $G$. Let $E(G)$ denote the arc (edge) set of $G$ and let $V(G)$ denote the vertex set of $G$. Consider two vertices $s,t\in V(G)$ and a path $P\subseteq E(G)$ from $s$ to $t$. Let $\ell,u\in \mathbb{R}_{\ge 0}^{E(G)}$ denote two different sets of weights for the arcs in $G$, with $\ell(e) \le u(e)$ for all $e\in E(G)$. 

\begin{definition}[Simple Valid Explanations] An \emph{explanation} of the path $P$ is a collection of weights $w_e\in \mathbb{R}_{\ge 0}$ for all $e\in E(G)$ with the following property:
\begin{enumerate}
    \item (Sufficiency) $P$ is a shortest $s$-$t$ path in the digraph $G$ with arc weights $\{w_e\}_{e\in E(G)}$.
\end{enumerate}
We call an explanation of $P$ \emph{valid} if it satisfies the following constraint:
\begin{enumerate}[start=2]
    \item (Validity) $\ell(e) \le w_e \le u(e)$ for all $e\in E(G)$.
\end{enumerate}
Let $\mc W(\ell,u,P)$ denote the set of valid explanations for $P$. Call an arc $e\in E(G)$ \emph{pliable} if $\ell(e) \ne u(e)$; i.e. $\ell(e) < u(e)$. Let $X(\ell,u)\subseteq E(G)$ denote the set of pliable arcs in $G$.
Let $\tau\in \mathbb{R}^{E(G)}_{\ge 0}$ be a vector with nonnegative coordinates. The $\tau$-\emph{valuation} of a pliable arc $e$ with respect to a valid explanation $w$ is defined as $V_{\tau}(w,e) := \tau(e)(w_e - \ell(e)).$ The $\tau$-\emph{valuation} of a valid explanation $w$ is defined as the sum of the valuations
of its arcs, 
$V_{\tau}(w) := \sum_{e\in X(\ell,u)} V_{\tau}(w,e).$
A valid explanation $w\in \mc W(\ell,u,P)$ is called $\tau$-\emph{simple} if it satisfies: 
\begin{enumerate}[start=3]
    \item (($\tau$-)Simplicity) the explanation $w$ has minimum $\tau$-valuation; i.e.,
    $V_{\tau}(w) = \min_{w'\in \mc W(\ell,u,P)} V_{\tau}(w')$.
\end{enumerate}
For a valid explanation $w$, let $\text{support}(w)$ denote the subset of $e\in E(G)$ for which $w_e > \ell(e)$. We call a valid explanation $w$ \emph{nontrivial} if $\text{support}(w)$ is nonempty. We drop $\tau$ from these definitions when its value is clear from context.
\end{definition}

\subsection{Illustrating the Simplicity Criterion}

We now illustrate the meaning of the simplicity criterion in terms of its discriminative power between different valid explanations. Fix a number $k > 1$ and form a non-simple digraph $G$ as follows. $G$ has three vertices: $s$, $v$, and $t$. There are $k$ parallel arcs $e_i = (s,v)$, making $G$ a non-simple graph. There is a single arc $f = (v,t)$, and a single arc $e = (s,t)$. Thus, $V(G) = \{s,v,t\}$ and $E(G) = \{e_1,\hdots,e_k,f,e\}$. Let $P = \{e\}$. We define the functions $\ell$ and $u$ as follows:

\begin{enumerate}
    \item $\ell(e) = 100$, $\ell(e_i) = 49$ for all $i\in \{1,2,\hdots,k\}$, and $\ell(f) = 49$;
    \item $u(e) = 100$, $u(e_i) = 51$ for all $i\in \{1,2,\hdots,k\}$, and $u(f) = 51$.
\end{enumerate}

Let $\tau(e) = 1$ for every arc $e$. Note that $P$ is not the shortest path under the arc weights $\ell$, so every valid explanation for $P$ is nontrivial. However, there are multiple valid explanations of $P$. For example:
\begin{enumerate}
    \item $x$, which is defined as $x_e = 100$, $x_{e_i} = 51$ for all $i\in \{1,2,\hdots,k\}$, and $x_f = 49$.
    \item $w$, which is defined as $w_e = 100$, $w_{e_i} = 49$ for all $i\in \{1,2,\hdots,k\}$, and $w_f = 51$.
\end{enumerate}
These two explanations are depicted in Figure \ref{fig:two-graphs}. While both are valid, $w$ is clearly the better explanation, as it is more succinct. Instead of penalizing all of the $e_i$'s, one merely has to penalize $f$ to make $P$ the shortest path. The minimum valuation of any valid explanation in this example is $V(w) = 2$. As a result, $w$ is an SVE, while $x$ is only a valid explanation because $V(x) = 2k > 2$. Thus, in this example, simplicity favors the better explanation. This is particularly relevant in the context of bridges, as we illustrate in the next section.

\definecolor{darkgreen}{rgb}{0.0, 0.5, 0.0}

\begin{figure}[H]
    \centering
    \scalebox{0.9}{
        \centering
        \begin{tikzpicture}[
            vertex/.style={circle,draw,minimum size=20pt,inner sep=0pt},
            thinedge/.style={->,>=stealth,thin},
            veryboldedge/.style={->,>=stealth,line width=1.5pt}
        ]
            
            \node[vertex] (s) at (0,0) {s};
            \node[vertex] (v) at (2,0) {v};
            \node[vertex] (t) at (4,0) {t};
            
            \draw[red][veryboldedge] (s) to[bend left=30] (v);
            \draw[red][veryboldedge] (s) to[bend left=0] (v);
            \draw[red][veryboldedge] (s) to[bend right=30] node[midway,below] {51} (v);
            
            \draw[thinedge] (v) to node[midway,below] {49} (t);
            
            \draw[darkgreen][thinedge] (s) to[bend left=40] node[midway,above] {100} (t);
        \end{tikzpicture}
        \centering
        \begin{tikzpicture}[
            vertex/.style={circle,draw,minimum size=20pt,inner sep=0pt},
            thinedge/.style={->,>=stealth,thin},
            veryboldedge/.style={->,>=stealth,line width=1.5pt}
        ]
            
            \node[vertex] (s) at (0,0) {s};
            \node[vertex] (v) at (2,0) {v};
            \node[vertex] (t) at (4,0) {t};
            
            \draw[thinedge] (s) to[bend left=30] (v);
            \draw[thinedge] (s) to[bend left=0] (v);
            \draw[thinedge] (s) to[bend right=30] node[midway,below] {49} (v);
            
            \draw[red][veryboldedge] (v) to node[midway,below] {51} (t);
            
            \draw[darkgreen][thinedge] (s) to[bend left=40] node[midway,above] {100} (t);
        \end{tikzpicture}
    }
    \caption{The valid explanations $x$ (left) and $w$ (right). Arcs $g$ with weights not equal to $\ell(g)$ are red bolded for the green path (just one arc). Note that the total weight of arcs in $w$ is less than in $x$, which matches the fact that $w$ is a $\tau$-SVE while $x$ is not. Thus, the simplicity criterion correctly discriminates between these two explanations by picking the one with fewer edges.}
    \label{fig:two-graphs}
\end{figure}
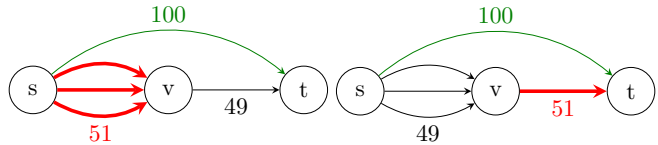

\subsection{Road Network Example}

Consider a real-life example from Seattle, Washington as depicted in Figure~\ref{fig:seattle_bridge_ex}. Suppose that there is heavy traffic on the shortest path, which results in the routing engine suggesting the lower path (along I-90 shown in green). Note that the lower path is considerably longer than the shortest path. The simplest explanation is to assign all the weight to the segments corresponding to the bridge along the upper path (red arc). This results in an intuitive explanation: ``Due to heavy traffic on the Route 520 bridge, take I-90 instead.'' 

\begin{figure}[H]
    \centering
    \includegraphics[width=0.9\columnwidth]{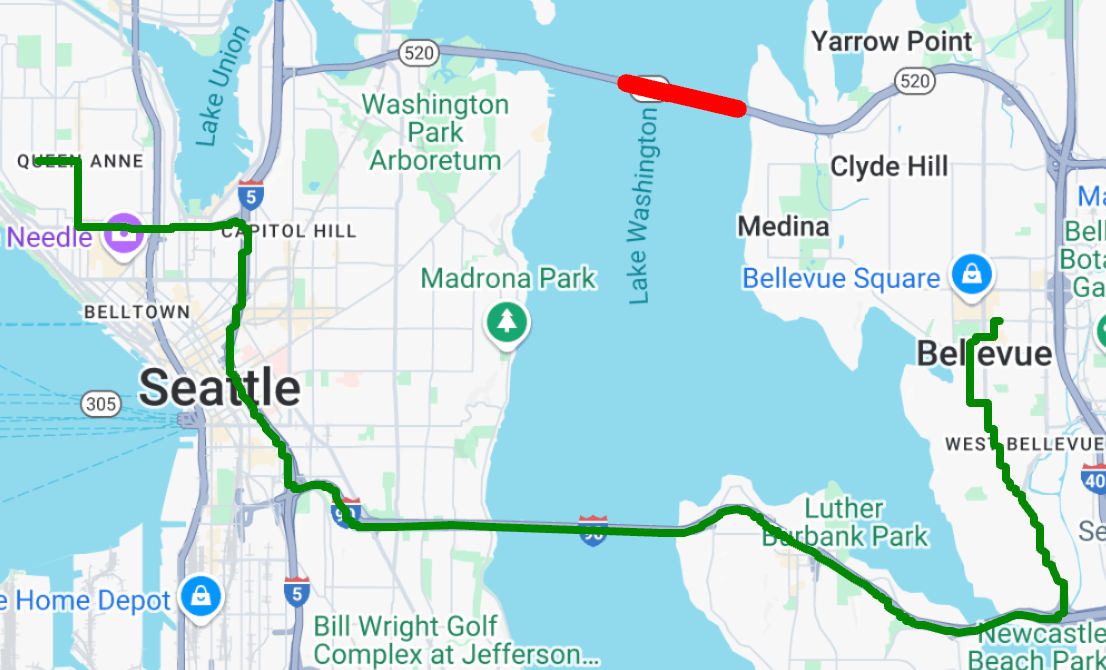}
    \caption{Simplicity criterion on an actual road network. The green route is the shortest path with traffic, with relevant traffic (the SVE) depicted in red. The SVE is the intuitive explanation here: the green path is optimal because of traffic on the 520 bridge, which would be a shorter route under no-traffic conditions.}
    \label{fig:seattle_bridge_ex}
\end{figure}

\subsection{Penalty-Based Explanations: A Natural Baseline}

There is a natural baseline algorithm for constructing an explanation based on the classic penalty method for computing alternate routes \cite{CBB07}. To the best of our knowledge, this is the only known baseline for this problem. This method generalizes the natural concept of taking all obstructions that exist along the shortest path. Given an $s$-$t$ path $P$ to explain with lower and upper bounds $\ell$ and $u$, construct an explanation $w$ for $P$ as follows:

\begin{enumerate}
    \item Initialize $w = \ell$ and $Q = \emptyset$.
    \item While $w(Q) < w(P)$,
    \begin{enumerate}
        \item For every $e\in Q\setminus P$, change $w_e = u(e)$.
        \item Let $Q$ be the shortest $s$-$t$ $w$-weighted path in $G$
    \end{enumerate}
    \item Return $w$.
\end{enumerate}

\begin{lemma}\label{lem:pbe}
If there exists a valid explanation for $P$, then this algorithm terminates in a valid (but not necessarily simple) explanation of $P$; i.e. one that satisfies the validity and sufficiency properties.
\end{lemma}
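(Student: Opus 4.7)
The plan is to verify validity and sufficiency separately from termination. Validity is immediate by inspection: step (a) only ever sets $w_e = u(e)$ starting from $w_e = \ell(e)$, and $P$-arcs are never modified (since only arcs in $Q\setminus P$ are touched). So $\ell(e)\le w_e\le u(e)$ for every arc at every point in the execution. Sufficiency is also essentially by construction: the while-loop exits precisely when $w(Q)\ge w(P)$, where $Q$ is a shortest $s$-$t$ $w$-weighted path, which immediately says $P$ is a shortest $s$-$t$ path under $w$.

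The substantive obstacle is termination. The natural progress measure is the set $S = \{e\in E(G)\setminus P : w_e = u(e)\}$, which is monotone nondecreasing over the course of the algorithm. I would show that in every iteration in which the loop body executes, $|S|$ strictly increases by at least one, so termination occurs within at most $|E(G)\setminus P|$ iterations.

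The key claim to prove is: whenever $w(Q) < w(P)$, there exists an arc $e \in Q\setminus P$ with $w_e = \ell(e) < u(e)$, which step (a) then promotes to $u(e)$ (adding it to $S$). I would prove this by contradiction using the hypothesis that some valid explanation exists. Let $w^\star$ be the ``canonical'' explanation defined by $w^\star_e = \ell(e)$ for $e\in P$ and $w^\star_e = u(e)$ for $e\notin P$. Because pushing non-$P$ arcs up and $P$ arcs down only makes $P$ more competitive among $s$-$t$ paths, $w^\star$ is itself a valid explanation whenever any valid explanation exists. Throughout the algorithm $w\le w^\star$ pointwise (with equality on $P$, since $P$-arcs are always at $\ell$). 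Now suppose toward contradiction that every $e\in Q\setminus P$ already has $w_e = u(e)$. Then $w$ and $w^\star$ agree on every arc of $Q$ (on $Q\cap P$ both equal $\ell(e)$; on $Q\setminus P$ both equal $u(e)$) and on every arc of $P$. Hence $w^\star(Q) = w(Q) < w(P) = w^\star(P)$, contradicting the fact that $P$ is a shortest $s$-$t$ path under $w^\star$.

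Combining the three pieces, the loop terminates in at most $|E(G)\setminus P|$ iterations and the returned $w$ satisfies validity and sufficiency, as required.
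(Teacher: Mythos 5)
Your proof is correct and follows essentially the same strategy as the paper's: the canonical explanation $w^\star$ (all $P$-arcs at $\ell$, all other arcs at $u$) is exactly the paper's auxiliary weighting $y$, and your contradiction (if $Q \setminus P$ is fully saturated then $w$ and $w^\star$ agree on $Q$ and $P$, so $w^\star(Q) < w^\star(P)$) is the same computation the paper performs, just phrased through a monotone progress measure $S$ instead of arguing $w' \ne w$ directly. Both versions share the same benign imprecision at the initial iteration where $Q = \emptyset$, which is harmlessly repaired by noting that the first pass merely populates $Q$ with the $\ell$-shortest path before the counting argument takes over.
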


\begin{proof}
It suffices to show that this algorithm terminates, as when it does, $P$ is the shortest path according to the returned $w$ by definition, and $w$ is valid by construction. Let $x$ be a valid explanation for $P$, which exists by assumption. Without loss of generality, we may assume that $x(e) = \ell(e)$ for all $e\in P$, as $P$ remains a shortest path after decreasing weights on $P$ only. Let $y\in \mathbb{R}^{E(G)}$ denote the vector obtained by letting $y(e) = \ell(e)$ for all $e\in P$ and $y(e) = u(e)$ otherwise. Since increasing weights only increases the length of the shortest path, $y$ must also be a valid explanation for $P$.

Consider two explanations $w$ and $w'$ at the beginning and end of a single iteration respectively, with $Q$ denoting the shortest path from the previous iteration.\footnote{Previous is chosen because $Q$ is updated after $w$ during each iteration.} If $w = w'$, then $w_e = u(e)$ for all $e\in Q\setminus P$. This means that $w(Q) = y(Q)$. Since $y$ satisfies sufficiency, $y(Q) \ge y(P)$. By definition of $y$, $y(P) = \ell(P) = w(P)$. Thus, $w(Q) \ge w(P)$, which means that the while loop should have exited, a contradiction. Thus, $w' \ne w$. Since each iteration adds an additional coordinate $e$ in which $w_e = u(e)$, there can only be $|E(G)|$ iterations, as desired.
\end{proof}

We call the $w$ returned by this algorithm the \emph{penalty-based explanation} (PBE) for $P$. This algorithm just repeatedly takes all obstructions along the shortest path and recalculates the path until the path becomes the desired path. This generalizes the idea of explaining a path by reporting all traffic conditions along the shortest path.

Throughout this work, we compare SVEs to PBEs to get a better understanding for the size of SVEs. There are some important qualitative differences between SVEs and PBEs:

\begin{enumerate}
    \item PBEs contain too many segments. Specifically, PBEs take all obstructions along a shortest path, while SVEs have the potential to select only important obstructions. One could modify the PBE algorithm to only increase some weights along a path rather than all weights, but this adds additional complexity that does not exist in the SVE algorithm.
    \item The PBE algorithm is too path-oriented. In particular, throughout our experiments, we assess SVEs by their ability to recover or subselect obstructions along paths generated by a PBE-like algorithm. PBEs, by construction, will recover all of these obstructions. However, their ability to do this is very brittle: if the PBE algorithm is modified even slightly, it could produce a completely different set of paths and thus include obstructions from other parts of the graph in its explanation. SVEs will not do this.
\end{enumerate}

Our second point highlights a crucial advantage to SVEs -- they do not use paths to generate an explanation. In addition to these qualitative advantages, SVEs generate smaller explanations. In particular, we will show the following:
\begin{enumerate}
    \item Consider cases where $P$ is generated by generating the shortest path, deleting some number of arcs along that path, and repeating. We call such scenarios \emph{closure scenarios}. In closure scenarios, SVEs are always a subset of PBEs (Theorem \ref{thm:k-path}.) Furthermore, experimentally, they are often strictly smaller than PBEs.
    \item When $P$ is instead generated by generating the shortest path, adding a small delay to each segment along that path, and repeating (which we call \emph{incident scenarios}), SVEs are no longer guaranteed to be smaller than PBEs, but we observe experimentally that they are usually much smaller than PBEs.
\end{enumerate}
We run our experiments in multiple geographies and observe similarly positive results.
\subsection{Related Work}

The areas of explainable and interpretable optimization and learning have seen a tremendous interest in recent years, driven by the tremendous growth in algorithmic decision making and the increased complexity of the associated algorithms~\cite{Samek-et-al,molnar2020interpretable,doshi2017towards,hoffman2018metrics,dwivedi-etal}. While there has been interest in changing the algorithms to make them more explainable, e.g., for clustering~\cite{moshkovitz2020explainable}, our focus in this work is on explaining the output of an arbitrary route-selection algorithm.

Our approach to explaining shortest paths is closely related to \emph{shortest-path interdiction problems} and the pioneering work of Fulkerson and Harding~\cite{FH77} 
which directly inspires our approach. In these problems, we are given an interdiction budget $B$ and a cost for interdicting each edge, the goal is to make the resulting shortest path as long as possible, subject to the interdiction cost being at most $B$. Fulkerson and Harding \cite{FH77} consider the model where the edges can be interdicted fractionally (thereby raising their lengths linearly); this allows them to formulate the problem as a linear program, which they solve using a min-cost flow algorithm.  One difference between their setting and ours is that we are given a specific path $P$ which we want to explain, and our goal is to make $P$ a shortest path. More recent investigations of shortest-path interdiction consider the binary setting, where edge can either be retained or deleted (at some \emph{cost} $c(e)$), and the total cost of the deleted edges is at most $B$; these problems are NP-hard in general~\cite{BGV89,KBBEGRZ} and only bicriteria approximations are known for them~\cite{CZ17}. Approaches based on mixed-integer programs have also been considered, e.g., see~\cite{IW02}. 

The work of Forel et al.~\cite{forel2023explainable,forel2024don} uses a counterfactual explanation methodology to explain solutions to data-driven problems, for random forest and nearest-neighbor predictors. (They ask for alternative contexts in which the previous expert-based solution is better than the data-driven solution.) Aigner et al.~\cite{aigner2024framework} study the problem of generating explainable solutions in settings where the same problem is repeatedly solved, by considering two optimization criteria simultaneously: the objective value, and the explainability. The explainability of a solution is measured by comparing the current solution against solutions that were implemented in the past in similar situations.

Erwig and Kumar~\cite{erwig2024explanations} give explanations for combinatorial optimization problems by presenting foil solutions which are plausible alternative solutions, and show that these foils have worse solution value. While the two are similar in spirit, our approach can be considered as a rigorous approach to generating these foil solutions, and also to generating a simultaneously succinct explanation of the proposed path versus all these foils.


\section{Algorithm}\label{sec:algo}

\newcommand{\Residual}{\textsc{Residual}\xspace}
\newcommand{\Modify}{\textsc{Modify}\xspace}
\newcommand{\CutCertificate}{\textsc{CutCert}\xspace}

Consider the problem where we are given a directed graph $G$, weights $\ell, u$ on the arcs, a path $P$ from $s$ to $t$, and some weighting of the arcs $\tau$. Without loss of generality, we may assume that $\tau(e) = 0$ for all non-pliable arcs, as $w_e = \ell(e)$ for those arcs anyways. We want to find a valid explanation $w$ for path $P$ with minimum valuation. For this, recall the linear program \ref{LP1} from the introduction with variables $w_e$ and $d_v$: (We call this the \emph{cut formulation} of the problem.)
\begin{align}
    \min \sum_{e \in E(G)} & \tau(e)(w_e - \ell(e)) && \tag{LP1} \label{LP1}\\
    w_e &\le u(e) && \forall e \in E(G) \label{eq:1}\\
    w_e &\ge \ell(e) && \forall e\in E(G) \label{eq:2} \\
    d_v - d_u &\leq w_e & &\forall e = (u,v) \in E(G) \label{eq:3} \\
    d_v - d_u &= w_e && \forall e = (u,v) \in P. \label{eq:4}
\end{align}

\begin{theorem}
    The optimal solution for the linear program \ref{LP1} is a $\tau$-SVE for the path $P$.
\end{theorem}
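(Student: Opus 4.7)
The plan is to show that the $w$-projection of the feasible region of \ref{LP1} coincides exactly with the set $\mc W(\ell,u,P)$ of valid explanations. Since the LP objective $\sum_e \tau(e)(w_e - \ell(e))$ is literally $V_\tau(w)$, this immediately implies that any LP optimum is a $\tau$-simple valid explanation.

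The easy direction is LP-feasible $\Rightarrow$ valid explanation. Given a feasible $(w,d)$, constraints \ref{eq:1} and \ref{eq:2} give validity directly. For sufficiency, let $Q$ be any $s$-$t$ path. Telescoping \ref{eq:3} along $Q$ gives $d_t - d_s \le \sum_{e\in Q} w_e = w(Q)$. Telescoping \ref{eq:4} along $P$ gives $d_t - d_s = w(P)$. Therefore $w(P) \le w(Q)$ for every $s$-$t$ path $Q$, so $P$ is a shortest $s$-$t$ path under $w$.

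The reverse direction is the step that needs the most care: given a valid explanation $w\in \mc W(\ell,u,P)$, I must exhibit a $d$ such that $(w,d)$ is LP-feasible. The natural choice is $d_v := \mathrm{dist}_w(s,v)$, the shortest-path distance from $s$ to $v$ under weights $w$ (set $d_s = 0$). Then $d_v - d_u \le w_e$ for every arc $e=(u,v)$ is just the triangle inequality for shortest-path distances, giving \ref{eq:3}. The main thing to verify is the equality \ref{eq:4} on arcs of $P$. Write $P = (s=v_0, v_1,\ldots, v_k = t)$. On one hand, using the prefix of $P$ as a candidate path, $d_{v_i} \le \sum_{j\le i} w_{e_j}$. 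On the other hand, concatenating an optimal $s$-$v_i$ path with the suffix of $P$ from $v_i$ to $t$ yields an $s$-$t$ walk of length $d_{v_i} + \sum_{j>i} w_{e_j}$, which is at least $\mathrm{dist}_w(s,t) = w(P)$ by sufficiency of $w$. Rearranging gives $d_{v_i} \ge \sum_{j\le i} w_{e_j}$, so equality holds and $d_{v_i} - d_{v_{i-1}} = w_{e_i}$ for each $i$, establishing \ref{eq:4}.

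Combining the two directions, the $w$-projection of the LP's feasible region equals $\mc W(\ell,u,P)$, and minimizing the LP objective is equivalent to minimizing $V_\tau(w)$ over $\mc W(\ell,u,P)$. Hence any optimal $w^\star$ is a $\tau$-SVE. The only mild subtlety is the equality direction above, where one must invoke sufficiency of $w$ exactly once (via the suffix-of-$P$ argument) to promote the easy inequality $d_{v_i} \le \sum_{j \le i} w_{e_j}$ into an equality; the rest is essentially bookkeeping.
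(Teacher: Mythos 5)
Your proof is correct and uses the same forward direction as the paper: telescoping constraints \eqref{eq:3} and \eqref{eq:4} shows that any LP-feasible $w$ makes $P$ a shortest $s$-$t$ path, and \eqref{eq:1}, \eqref{eq:2} give validity. Where you differ -- and improve on the paper -- is in explicitly proving the reverse inclusion: that every valid explanation $w\in\mc W(\ell,u,P)$ admits a feasible potential $d$ (namely $d_v=\mathrm{dist}_w(s,v)$), with the tight equalities on $P$ established by your prefix/suffix argument. The paper simply asserts that ``the minimization ensures the simplicity property,'' but that assertion is not automatic: it only follows once one knows the $w$-projection of the LP's feasible region is \emph{all} of $\mc W(\ell,u,P)$, not a proper subset, since otherwise the LP optimum would only be minimal among LP-representable explanations. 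Your argument closes this gap and is the more complete version of the proof. One small loose end: $\mathrm{dist}_w(s,v)$ is infinite for vertices unreachable from $s$; assigning any sufficiently large finite constant to those $d_v$ handles this (no arc can run from a reachable to an unreachable vertex, so the only new constraints are of the form $d_v - M \le w_e$ with $v$ reachable, which hold once $M$ is large enough), so the construction goes through.
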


\begin{proof}
    The objective function of the linear program is the valuation of the path with respect to $w$, and hence the minimization ensures the simplicity property. Constraints \eqref{eq:1} and \eqref{eq:2} enforce validity of the weight function $w$, so it remains to show the sufficiency property, i.e., that $P$ is indeed a shortest $s$-$t$ path with respect to the weight function $w$. Indeed, consider any path $Q$, and let $e_1, e_2, \ldots, e_k$ be the arcs on $Q$, where each $e_i = (u_{i-1}, u_{i})$. Then $w(Q) = \sum_{j=1}^k w_{e_j} \geq \sum_{j=1}^k d_{u_{i}} - d_{u_{i-1}}$, using constraint \eqref{eq:3}. Using that $u_{k} = t$ and $u_0 = s$, the resulting telescoping sum gives $w(Q) \geq d_t - d_s$. A similar calculation for the path $P$, now using constraint \eqref{eq:4} gives $w(P) = d_t - d_s \leq w(Q)$, which shows that $P$ is a shortest path with respect to $w$, and completes the proof. 
\end{proof}

\subsection{Runtime of Solving the Cut Formulation}

While the cut formulation can be solved using an off-the-shelf LP solver like CPLEX or Gurobi for small instances, road networks contain millions of nodes and arcs, leading to cut formulation instances with millions of constraints. CPLEX and Gurobi are not suitable for such large instances. Instead, we use the special structure of the cut formulation to get a fast combinatorial solver that can be applied to graphs with millions of nodes.

\subsection{Choosing $\tau$}

To complete the specification of the cut formulation, we need to pick $\tau$. In this section, we discuss the considerations behind that choice. For simplicity, we require that the entries of $\tau$ be integers. This can be achieved by rounding. The runtime of our algorithm empirically is linear in $\tau_{\max} := \max_{e\in E(G)} \tau(e)$. This leads to a few candidates for $\tau(e)$ for all $e\in E(G)$:

\medskip
\noindent\textbf{Option 1:} $\tau(e) = 1$. This leads to a fast algorithm, as $\tau_{\max} = 1$. Intuitively, a good explanation should consist of a small number of arcs, which does not perfectly align with the $\tau$-valuation objective. While this valuation works well in simple scenarios, it does not always lead to short explanations, so we do not investigate this value of $\tau$ in this work.

\medskip
\noindent\textbf{Option 2:} $\tau(e) = \frac{1}{u(e) - \ell(e)}$. This setting has the useful property that for any $e$ and any valid explanation $w$, $0\le V_{\tau}(w,e) \le 1$. As a result, for any valuation $w$ with $w_e = u(e)$ or $w_e = \ell(e)$ for all $e$, $V_{\tau}(w)$ is exactly the number of arcs for which $w_e \ne \ell(e)$; i.e. the size of the explanation. Thus, the valuation of an explanation relates to its size, at least for certain $w$s. 
However this choice has one main downside: rounding $\tau$s requires setting a high value for $\tau_{\max}$, which significantly affects the running time of our algorithms. 
Nevertheless, our theoretical results all work for this choice of $\tau$.
\medskip
\noindent\textbf{Option 3:} $\tau(e) = 1 + \left\lfloor \frac{C_0 \ell(e)}{u(e)}\right\rfloor$ for some constant $C_0 \ge 1$. When $C_0 = 1$, all pliable arcs have the same weight, so this is the same as Option 1. However, for $C_0$ approaching infinity, this resembles Option 2, both in runtime and quality, with this value chosen instead of $\tau(e) = 1 + \lfloor \frac{C_0}{u(e) - \ell(e)}\rfloor$ to be scale-invariant. On the other hand, even for segments with very large $u(e)$, this choice still applies some penalty. While we cannot show theoretical results about this choice, we do use this choice in our experiments due to the high runtime of Option 2. To trade off between runtime and quality, we use $C_0 = 10$. In spite of this choice being somewhat close to 1, we experimentally show that $\tau$-SVEs are small anyways.\\

Our combinatorial algorithm is based on the classical augmenting paths framework for solving flow problems. We give our combinatorial algorithm for solving the cut formulation in the appendix of the full version of this paper.

\subsection{Subgraphs}

For the experiments in this paper, we ran our combinatorial algorithm on the full road network. To achieve an order-of-magnitude runtime improvement and enable real-time use while generating similar explanations, one can instead run our algorithm on a subgraph consisting of all segments that are likely to be relevant to the given query pair. On one choice of subgraph, our combinatorial algorithm takes 50 milliseconds for each query pair in our Seattle query set when run on an Apple M1 Pro. Thus, our algorithm is suitable for real-time use.

\section{Closures v.s. Incidents and SVEs v.s. PBEs}\label{sec:closures}

In this section, we give theoretical results comparing PBEs to SVEs that show that SVEs always perform at least as well as PBEs in both single-closure and multi-closure settings. In Sections \ref{sec:examples} and \ref{sec:experiments}, we will see experimentally that SVEs do well at explaining routes that were diverted due to both closures and delays. Our theoretical results primarily focus on closures. In both our theoretical and experimental analysis, we answer one key question: how small are SVEs? We use PBEs as a benchmark to assess relative size.

\subsection{Our Evaluation Setup}\label{subsec:eval-setup}

A \emph{scenario} is a tuple $(G,\ell,u,P)$ for which we would like a valid explanation of the path $P$. We construct two different types of scenarios on which to evaluate our algorithm: \emph{closure scenarios} and \emph{incident scenarios}. We define each of these now:\\

\textbf{Closure scenarios}. Let $G$ be an arbitrary (no-traffic) $\ell$-weighted digraph. Pick a positive integer $k$, representing the number of closures generated. Pick $s,t\in V(G)$. Initialize $y_e = \ell(e)$ and let $P_0$ be the $y$-weighted shortest $s$-$t$ path in $G$. For $i=1,2,\hdots,k$, let $e_i$ be an arbitrary edge on $P_{i-1}$, delete $e_i$ (i.e. set $y_{e_i} = \infty$), let $P_i$ be the shortest $y$-weighted $s$-$t$ path in $G$, and increment $i$. Then, let $u(e)$ be an arbitrary number\footnote{$u(e)$ is sometimes set higher in our experiments in order to introduce additional irrelevant traffic to make the explainer's task harder.} for which $u(e) \ge y_e$ for all arcs $e\in E(G)$ and let $P := P_k$. $u(e)$ is required to be finite for all $e\ne e_i$ for some $i$ and $u(e_i) = \infty$ for all $i$.

Note that closure scenarios have some flexibility induced by the choice of $e_i$ for each $i$. In our experiments, we delete multiple segments around $e_i$ -- not just $e_i$ -- before generating $P_i$ in order to increase the diversity of the $P_i$s. In our theoretical results, we only consider deletion of a single $e_i$ to obtain $P_i$. This is critical to Theorem \ref{thm:two-path}, but not important to Theorem \ref{thm:k-path}.\\

\textbf{Incident scenarios}. Let $G$ be an arbitrary $\ell$-weighted digraph. Pick a positive integer $k$, $\gamma > 1$, and $s,t\in V(G)$. Initialize $y_e = \ell(e)$ and $P_0$ to be the shortest $y$-weighted $s$-$t$ path in $G$. For $i=1,2,\hdots,k$, reset $y_e = \gamma y_e$ for all $e\in P_{i-1}$ and let $P_i$ be the shortest $y$-weighted $s$-$t$ path in $G$. Let $P := P_k$ and $u(e)$ be an arbitrary number for which $u(e) \ge y_e$ for all arcs $e\in E(G)$.

Note that incident scenarios uniquely specify the list of paths given $\gamma,s,t$, and $k$, unlike with closure scenarios. We further specify $u, \gamma$, and the $e_i$s in our experiments, but prove results in this section given arbitrary choices for these quantities.

\subsection{Theoretical Results: Single Closure}

We now show that, for all three options for $\tau$ presented in Section~\ref{sec:algo}, SVEs are at least as small as PBEs in the two-path setting, where there is a single closure.

\begin{theorem}\label{thm:two-path}
Let $(G,\ell,u,P = P_1)$ be a closure scenario with $k=1$. Suppose that $\tau$ satisfies both
\begin{enumerate}[nosep]
    \item $\tau(e) > 0$ for all $e\notin P_0$
    \item $\tau(e) \ge \tau(e_1)$ for all $e\in P_0$.
\end{enumerate}
Let $x$ and $w$ be the PBE and a $\tau$-SVE for this scenario respectively. Then, $\text{support}(w) \subseteq \text{support}(x)$. In fact, $\text{support}(w)$ entirely consists of arcs $e$ of $P_0$ for which $\tau(e) = \tau(e_1)$.
\end{theorem}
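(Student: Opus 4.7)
The plan is to pin down the exact $\tau$-valuation of any SVE and then read off the support from the equality conditions. Let $\Delta := \ell(P) - \ell(P_0) \ge 0$ (nonnegative since $P_0$ is the $\ell$-shortest $s$-$t$ path), and substitute $z_e := w_e - \ell(e) \ge 0$, so validity becomes $0 \le z_e \le u(e) - \ell(e)$, sufficiency becomes ``$P$ is shortest under $\ell + z$,'' and the valuation is $V_\tau(w) = \sum_e \tau(e) z_e$.

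For the lower bound, I apply sufficiency to the single competing path $Q = P_0$: rearranging $w(P_0) \ge w(P)$ and using $z \ge 0$ gives $\sum_{e \in P_0 \setminus P} z_e \ge \Delta + \sum_{e \in P \setminus P_0} z_e \ge \Delta$. Dropping the nonnegative contributions outside $P_0 \setminus P$ and then using condition~2 ($\tau(e) \ge \tau(e_1)$ on $P_0$) yields $V_\tau(w) \ge \tau(e_1)\,\Delta$. This is matched by $w^*$ defined by $w^*_{e_1} := \ell(e_1) + \Delta$ and $w^*_e := \ell(e)$ otherwise: validity holds since $u(e_1) = \infty$, and sufficiency holds because any path avoiding $e_1$ lies in $G \setminus \{e_1\}$ and so has $\ell$-length $\ge \ell(P_1) = w^*(P)$, while any path through $e_1$ has $w^*$-length $\ge \ell(P_0) + \Delta = w^*(P)$. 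Hence every SVE has valuation exactly $\tau(e_1)\,\Delta$.

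Next, I read off the support from the equality conditions. Tightness in the chain forces (i)~$\tau(e)\,z_e = 0$ for every $e \notin P_0 \setminus P$, and (ii)~$z_e > 0 \Rightarrow \tau(e) = \tau(e_1)$ for $e \in P_0 \setminus P$. Condition~1 gives $\tau(e) > 0$ off $P_0$, so (i) forces $z_e = 0$ for all $e \notin P_0$; for $e \in P_0 \cap P$, condition~2 gives $\tau(e) \ge \tau(e_1)$, and when $\tau(e_1) > 0$ this again forces $z_e = 0$. Combining with (ii) yields $\text{support}(w) \subseteq \{e \in P_0 \setminus P : \tau(e) = \tau(e_1)\} \subseteq \{e \in P_0 : \tau(e) = \tau(e_1)\}$, establishing the second claim. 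For the first claim, any $e \in \text{support}(w)$ is pliable (since $w_e > \ell(e)$ and $w_e \le u(e)$), and the PBE $x$ sets $x_e = u(e)$ for every $e \in P_0 \setminus P$ in the iteration where $Q = P_0$ (and never decreases weights afterwards), so every pliable $e \in P_0 \setminus P$ lies in $\text{support}(x)$; thus $\text{support}(w) \subseteq \text{support}(x)$.

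The main obstacle is the degenerate case $\tau(e_1) = 0$ (which arises, e.g., under Option~2 of the $\tau$ choices since $u(e_1) = \infty$): then tightness does not immediately exclude $z_e > 0$ for some $e \in P_0 \cap P$ with $\tau(e) = 0$, and such an $e$ would not lie in $\text{support}(x)$ since the PBE never modifies arcs of $P$. I would handle this by the exchange observation that zeroing $z_e$ for any $e \in P_0 \cap P$ preserves both validity (it shifts $w(P)$ and every $w(Q)$ through $e$ by the same amount) and valuation (since $\tau(e) = 0$), so one may restrict attention to SVEs whose support is disjoint from $P_0 \cap P$; the earlier analysis then applies verbatim.
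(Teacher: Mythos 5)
Your proof takes essentially the same route as the paper's: both reduce the problem to the single competitor $Q = P_0$, both use the same one-arc certificate ($w^*$ with all the extra weight on $e_1$, which the paper calls $z$), and both extract the support from the comparison between $V(w)$ and $\tau(e_1)\,\Delta$. The difference is one of packaging: you compute the optimal valuation exactly and then read the support off the equality conditions, whereas the paper argues by contradiction (if some $f \notin P_0$ lies in $\text{support}(w)$, then $V(w) > V(z)$). Your version is cleaner and, notably, actually \emph{proves} the ``In fact'' clause---the paper's proof only shows $\text{support}(w) \subseteq P_0$ and asserts $\text{support}(w) \cap P_1 = \emptyset$ ``by the simplicity criterion,'' but never derives the refinement that $\tau(e) = \tau(e_1)$ on $\text{support}(w)$, which falls out directly from your tightness analysis.

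You also correctly flag a genuine soft spot that the paper glosses over. The paper's claim ``$\text{support}(w) \cap P_1 = \emptyset$'' follows from simplicity only if $\tau > 0$ on $P_1$; condition~1 handles $P_1 \setminus P_0$, but on $P_0 \cap P_1$ the hypotheses only give $\tau(e) \ge \tau(e_1)$, which is vacuous when $\tau(e_1) = 0$. In fact one can build a closure scenario and a $\tau$ satisfying both stated hypotheses, with $\tau \equiv 0$ on $P_0$, for which a $\tau$-SVE exists whose support includes an arc of $P_0 \cap P_1$---and such an arc cannot lie in $\text{support}(x)$, since the PBE never raises weights on $P$. So the containment $\text{support}(w) \subseteq \text{support}(x)$ can genuinely fail under the hypotheses as written. (It does not fail for any of the paper's three concrete $\tau$ options: there, $\tau(e) = 0$ forces $e = e_1$, and $e_1 \notin P_1$.) Your proposed repair---zero out $z_e$ on $P_0 \cap P$ and ``restrict attention'' to such SVEs---shows that \emph{some} $\tau$-SVE has the claimed support, but the theorem is stated for an arbitrary $\tau$-SVE, so this does not close the gap; it shifts the blame to the theorem statement, which really needs the extra hypothesis $\tau(e) > 0$ for all $e \in P_0 \cap P_1$ (equivalently, for all $e \ne e_1$). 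The paper's own proof has the identical unstated dependence, so this is a shared flaw rather than an error unique to your argument.
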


Before we prove this, let us consider some possible settings for $\tau$. Setting $\tau(e) = 1$ for all $e$ satisfies both criteria trivially. Setting $\tau(e) = \frac{1}{u(e) - \ell(e)}$ also satisfies both criteria, as $\tau(e_1) = 0$ \footnote{Recall that $u(e_1) = \infty$.} and $\tau(e) > 0$ for all $e\ne e_1$. Finally, setting $\tau(e) = 1 + \left\lfloor \frac{C_0 \ell(e)}{u(e)}\right\rfloor$ also satisfies the criteria, as $\tau(e_1) = 1$ and $\tau(e) \ge 1$ for all $e$. Thus, all three options for $\tau$ considered in this paper satisfy the assumptions in the theorem.

\begin{proof}
First, notice that $\text{support}(x)$ is precisely the set of arcs $e$ in $P_0 \setminus P_1$ for which $u(e) \ne \ell(e)$. By the simplicity criterion, $\text{support}(w) \cap P_1 = \emptyset$. Thus, it suffices to show that $\text{support}(w) \subseteq P_0$; i.e. that $w_f = \ell(f)$ for all $f\notin P_0$.

Suppose, for the sake of contradiction, that there exists an arc $f\notin P_0$ for which $w_f > \ell(f)$. By definition of $\tau$, $\tau(f) > 0$. By the validity and sufficiency properties of $w$ respectively, $\ell(P_1) \le w(P_1) \le w(P_0)$. Thus,
\begin{align*}
    V(w) &= \sum_{e\in E(G)} \tau(e) (w_e - \ell(e))\\
    &\ge \tau(f) (w_f - \ell(f)) + \tau(e_1) (w(P_0) - \ell(P_0))\\
    &> \tau(e_1) (\ell(P_1) - \ell(P_0))
\end{align*}

We now show that there is a different valid explanation $z$ for $P_1$ with the property that $V(z) = \tau(e_1)(\ell(P_1) - \ell(P_0))$. Specifically, let $z_{e_1} := \ell(e_1) + \ell(P_1) - \ell(P_0)$ and $z_e = \ell(e)$ for all $e\ne e_1$. This satisfies the validity constraint, so it suffices to check sufficiency. Consider an $s$-$t$ path $K$. We check that its length according to $z$ is longer than $P_0$'s by considering two cases:

\begin{enumerate}
    \item $e_1\in K$: If $e_1\in K$, note that by validity of $z$ and the optimality of $P_0$,
\begin{gather*}
    z(K) = z_{e_1} + z(K\setminus \{e_1\}) \ge z_{e_1} - \ell(e_1) + \ell(K) \\ \ge z_{e_1} - \ell(e_1) + \ell(P_0) = \ell(P_1) = z(P_1)
\end{gather*}    
    
    \item $e_1\notin K$: If $e_1\notin K$, note that $z(K) = \ell(K) \ge \ell(P_1) = z(P_1)$, where the inequality follows from the fact that $P_1$ is the shortest path in the graph with $e_1$ deleted.
\end{enumerate}

Thus, in both cases, $z(K) \ge z(P_1)$, meaning that $P_1$ is a shortest path according to $z$, implying the sufficiency of $z$. Thus, $z$ is a valid explanation for $P_1$. Furthermore, $V(z) = \tau(e_1)(\ell(P_1) - \ell(P_0)) < V(w)$, contradicting the simplicity of $w$. Thus, $f$ cannot exist, as desired.
\end{proof}

The fact that $w$ is supported on arcs of $P_0$ for which $\tau(e) = \tau(e_1)$ highlights that small changes in $\tau$ can result in large decreases in support size---a fact that Option 3 for $\tau$ takes advantage of.

\subsection{Theoretical Results: Multiple Closures}

We now prove that if $\tau$ is chosen in a way that satisfies Option 2, then SVEs are a subset of PBEs even under multiple closures. 

\begin{theorem}\label{thm:k-path}
Pick an arbitrary positive integer $k$ and let $(G,\ell,u,P = P_k)$ be a closure scenario in which all arcs in $P_0\cup\hdots\cup P_k\setminus\{e_1,e_2,\hdots,e_k\}$ are not pliable. Fix a $\tau$ with the following properties:
\begin{enumerate}[nosep]
    \item $\tau(e_i) = 0$ for all $i\in \{1,2,\hdots,k\}$
    \item $\tau(e) > 0$ for all pliable $e\notin \{e_1,e_2,\hdots,e_k\}$.
\end{enumerate}
Let $x$ and $w$ be the the PBE and a $\tau$-SVE for this scenario respectively. Then, $\text{support}(w) \subseteq \text{support}(x)$.
\end{theorem}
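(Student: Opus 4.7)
The plan is to show the inclusion in two stages: first bound $\text{support}(w) \subseteq \{e_1,\ldots,e_k\}$ by exhibiting an explicit zero-valuation feasible solution, and then show that PBE raises every $e_j$ that can appear in $\text{support}(w)$.

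For the first stage, I would construct the explicit valid explanation $\hat w$ defined by $\hat w_{e_i} := u(e_i)$ for all $i \in \{1,\ldots,k\}$ and $\hat w_e := \ell(e)$ otherwise. Validity is immediate. Sufficiency follows from the closure-scenario construction: since $u(e_i) = \infty$, raising each $e_i$ effectively deletes it, and $P = P_k$ is by definition the shortest $s$-$t$ path in $G$ with the arcs $e_1,\ldots,e_k$ removed. Because $\tau(e_i) = 0$ for every $i$, we obtain $V_\tau(\hat w) = 0$, so the LP minimum is zero and any SVE $w$ must also satisfy $V_\tau(w) = 0$. Non-negativity of each summand $\tau(e)(w_e - \ell(e))$ combined with $\tau(e) > 0$ for every pliable $e \notin \{e_1,\ldots,e_k\}$ forces $w_e = \ell(e)$ on all such arcs; non-pliable arcs trivially satisfy this as well. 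Hence $\text{support}(w) \subseteq \{e_1,\ldots,e_k\}$.

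For the second stage, I would analyze the PBE iteration by iteration. The non-pliability assumption on $\bigcup_i P_i \setminus \{e_1,\ldots,e_k\}$ is essential: the only way PBE can lengthen a scenario path $P_i$ by re-weighting is by raising one of the $e_j$'s on it. I would then induct on PBE iterations: after PBE has raised $\{e_1,\ldots,e_{j-1}\}$, its next intermediate shortest path is $P_{j-1}$ (the scenario's shortest path in $G \setminus \{e_1,\ldots,e_{j-1}\}$), which contains $e_j$ by construction, forcing PBE to raise $e_j$ at the following step. The induction continues until $P = P_k$ is the current shortest, at which point every $e_j$ that any SVE needs to raise appears in $\text{support}(x)$.

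The main obstacle will be handling ties in shortest-path lengths: if $\ell(P_{j-1}) = \ell(P_k)$ for some $j < k$, the PBE could terminate upon arriving at a tied length without ever raising $e_j$. Handling this requires showing that in such degenerate cases, the SVE $w$ under consideration also does not place $e_j$ in its support—either because the LP-based algorithm of Section~\ref{sec:algo} returns a canonical minimum-support SVE, or because any valid explanation with $w_{e_j} > \ell(e_j)$ in a tied case admits a valid-explanation alternative with $w_{e_j} = \ell(e_j)$ of the same (zero) valuation. Making this degeneracy analysis precise, and reconciling it with the PBE's tie-breaking, is the subtlest step of the proof.
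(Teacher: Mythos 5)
Your proposal is essentially the same argument as the paper's, with a cosmetic reorganization. The paper first analyzes PBE to conclude $\text{support}(x) = \{e_1,\ldots,e_k\}$ and $V(x)=0$, then uses $x$ itself as the zero-valuation witness to force $V(w)=0$ and hence $\text{support}(w)\subseteq\{e_1,\ldots,e_k\}$. You instead fabricate a fresh zero-valuation witness $\hat w$ (which is in fact the very weight vector the PBE terminates with) to establish $\text{support}(w)\subseteq\{e_1,\ldots,e_k\}$ in Stage~1, and then separately verify $\text{support}(x)\supseteq\{e_1,\ldots,e_k\}$ in Stage~2. Same decomposition, same key ideas; the paper just avoids a second witness by noticing $x$ already does the job.

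Two remarks. First, your Stage~2 induction (``after raising $\{e_1,\ldots,e_{j-1}\}$ the next $Q$ is $P_{j-1}$'') is not literally correct: PBE raises \emph{every} pliable arc of $Q\setminus P$ in a round, so if some $e_m$ with $m>j$ happens to lie on an earlier $P_{i}$, several $e_j$'s can be raised simultaneously and the $Q$-sequence may skip some $P_i$'s. The paper's one-line claim has exactly the same imprecision; in both cases the argument survives because the only pliable arcs PBE ever encounters on scenario paths are $e_1,\ldots,e_k$, so $\text{support}(x)\subseteq\{e_1,\ldots,e_k\}$, and termination requires the remaining shortest path to reach length $\ell(P_k)$, which (given unique/tie-free shortest paths) cannot happen while some $e_i$ remains unraised. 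Second, your flagged tie-breaking worry is legitimate: if $\ell(P_j)=\ell(P_k)$ for some $j<k$, PBE can terminate before raising $e_{j+1},\ldots,e_k$, yet the $\tau$-SVE is not unique (any raising pattern on the $e_i$'s with zero valuation is optimal), so some SVE can place $e_{j+1}$ in its support while $x$ does not. The paper's proof does not address this; it implicitly assumes the closure-scenario path lengths are strictly increasing (equivalently, unique shortest paths at each stage). Your proposal correctly identifies the gap but, as you note, does not close it; since the theorem is stated for ``a $\tau$-SVE'' rather than the LP-canonical one, your proposed fix via canonicality would require weakening the statement.
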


The non-pliability assumption is only there to constrain the behavior of the PBE algorithm so that $\text{support}(x) = \{e_1,\hdots,e_k\}$. This underscores the brittleness of the PBE algorithm. Furthermore, the Option 2 choice of $\tau$ satisfies Properties (1) and (2) due to the $e_i$s having infinite $u(e)$ and all other arcs having finite $u(e)$. The other options for $\tau$ do not satisfy Properties (1) and (2) of Theorem \ref{thm:k-path}.

\begin{proof}
The PBE algorithm in iteration $i$ increments $x_e$ only for $e = e_i$. Thus, the $Q$s found in the PBE algorithm are exactly the $P_i$s, which means that $\text{support}(x) = \{e_1,\hdots,e_k\}$. Therefore, it suffices to show that $\text{support}(w) \subseteq \{e_1,\hdots,e_k\}$. Suppose, for the sake of contradiction, that there exists $f\in \text{support}(w) \setminus \{e_1,\hdots,e_k\}$. Then, $V(w) \ge \tau(f) (w_f - \ell(f)) > 0$ by the second condition on $\tau$ and the definition of $\text{support}(w)$. However, $V(x) = 0$, contradicting the optimality of $w$. Thus, $\text{support}(w) \subseteq \{e_1,\hdots,e_k\}$ as desired.
\end{proof}

\section{Examples}\label{sec:examples}

We now give examples that qualitatively demonstrate that the algorithm from Section \ref{sec:algo} produces compelling explanations. We demonstrate this in both simple and complex scenarios.

\subsection{Simple Examples: Single Road Closure}

Theorem \ref{thm:two-path} implies that in the presence of one blockage, SVEs do not pick irrelevant blockages to be a part of the explanation, as we intuitively expect. We now illustrate this fact using one scenario from Seattle and another from the German state of Baden-W{\"u}rttemburg, with $\tau(e) = 1 + \lfloor \frac{C_0 \ell(e)}{u(e)}\rfloor$ for all $e$ (Option 3). In both examples, we pick one arc along with some neighboring arcs along the path to delete. Furthermore, for all $e\notin P_0\cup P_1$, $u(e) = 2\ell(e)$. Thus, the explanation could include segments outside of $P_0$, but we observe that it does not, in accordance with Theorem \ref{thm:two-path}. In the first example, a bridge is blocked and the SVE correctly includes it in the explanation. In the second example, a surface street is blocked and the SVE correctly includes it. The PBE in both examples contains the found SVE, as guaranteed by the theorem. This is in spite of the fact that Theorem \ref{thm:two-path} does not exactly apply, as we delete more than one arc to obtain the path $P$.  Figure~\ref{fig:single-closure} illustrates this observation.

\begin{figure}[H]
    \centering
    \begin{subfigure}{.58\columnwidth}
    \includegraphics[width=0.95\columnwidth]{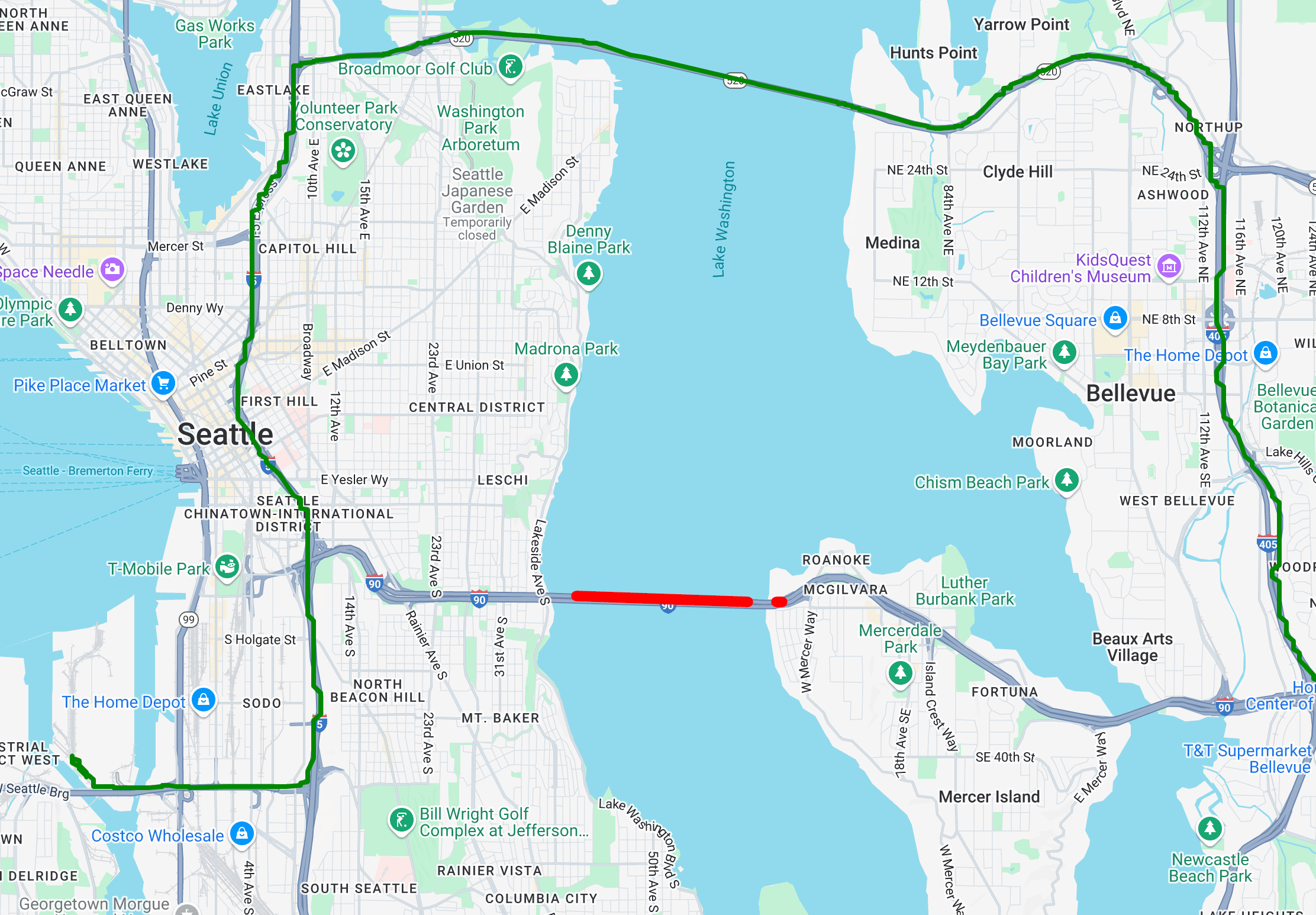}
    \caption{Seattle}
    \label{fig:single-closure-1}
    \end{subfigure}%
    \begin{subfigure}{.42\columnwidth}
    \includegraphics[width=0.95\columnwidth]{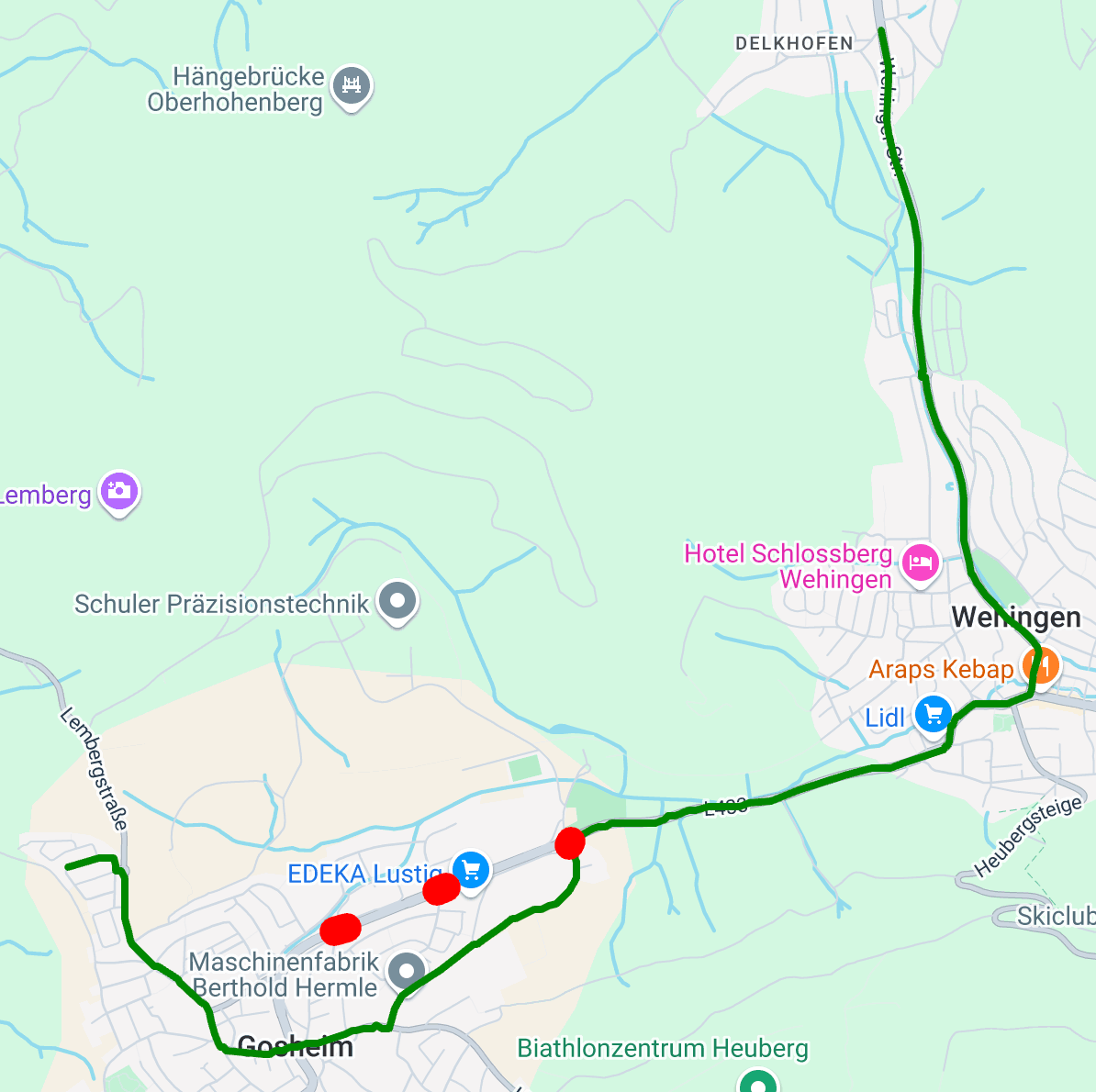}
    \caption{Baden}
    \label{fig:single-closure-2}
    \end{subfigure}
    \caption{Single-closure scenarios. The green route is the shortest traffic-aware path, while the red segments are the SVE. Note that the SVE is contained in the shortest no-traffic path in both settings, yielding the conclusion of Theorem \ref{thm:two-path} despite the conditions not quite being satisfied. In particular, the SVE is more succinct than the PBE.
    }
    \label{fig:single-closure}
\end{figure}

\subsection{Multiple Incidents}

In the previous section, we learned that SVEs do not contain irrelevant segments when one road segment is closed. This is no longer true when multiple road segments are closed: SVEs may not be confined to the incident set---for an example, see \Cref{fig:two-graphs}, as Theorem \ref{thm:k-path} does not apply for the Option 3 $\tau$. Nonetheless, we can still assess the size of the closure set. Figure~\ref{fig:multi-path} illustrates that SVEs typically contain a small fraction of the number of incidents present in the network. Figure~\ref{fig:api} illustrates that SVEs only contain traffic that is relevant to the path $P$.

\begin{figure}[H]
    \centering
    \begin{subfigure}{.53\columnwidth}
    \includegraphics[width=.95\columnwidth]{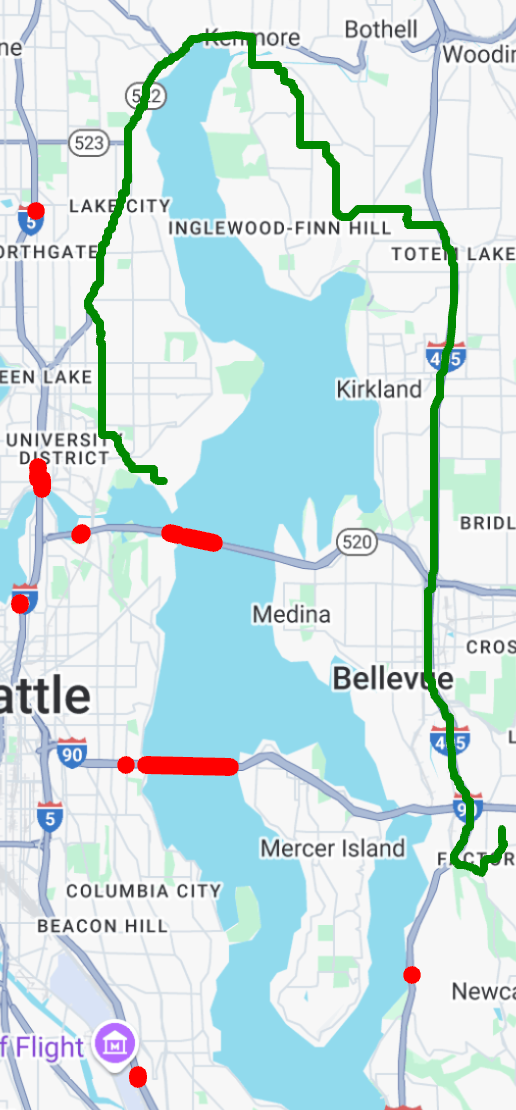}
    \caption{Seattle}
    \end{subfigure}%
    \begin{subfigure}{.44\columnwidth}
    \includegraphics[width=.95\columnwidth]{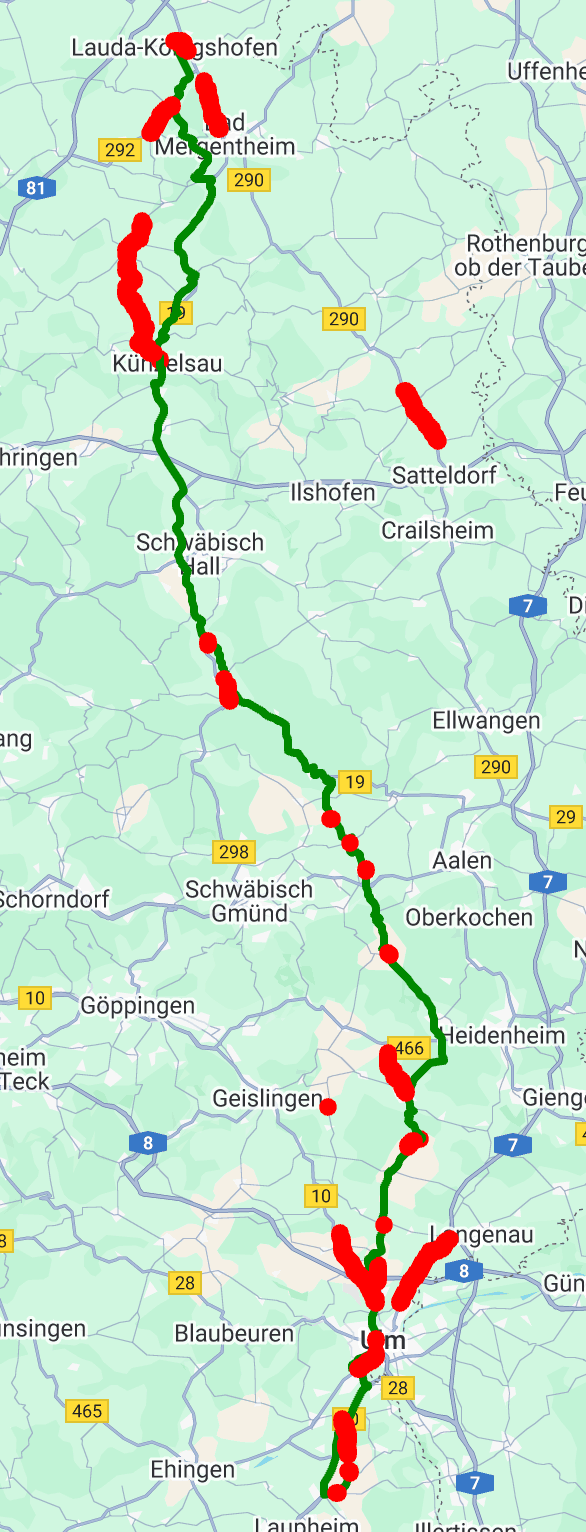}
    \caption{Baden}
    \end{subfigure}
    \caption{$k = 9$ multi-closure and incident scenarios for Seattle and Baden Long respectively. The explanation perfectly coincides with closures in Seattle. In Baden, we obtain a similar result even for incidents.}
    \label{fig:multi-path}
\end{figure}

\begin{figure}[H]
    \centering
    \begin{subfigure}{.45\columnwidth}
    \includegraphics[width=0.95\columnwidth]{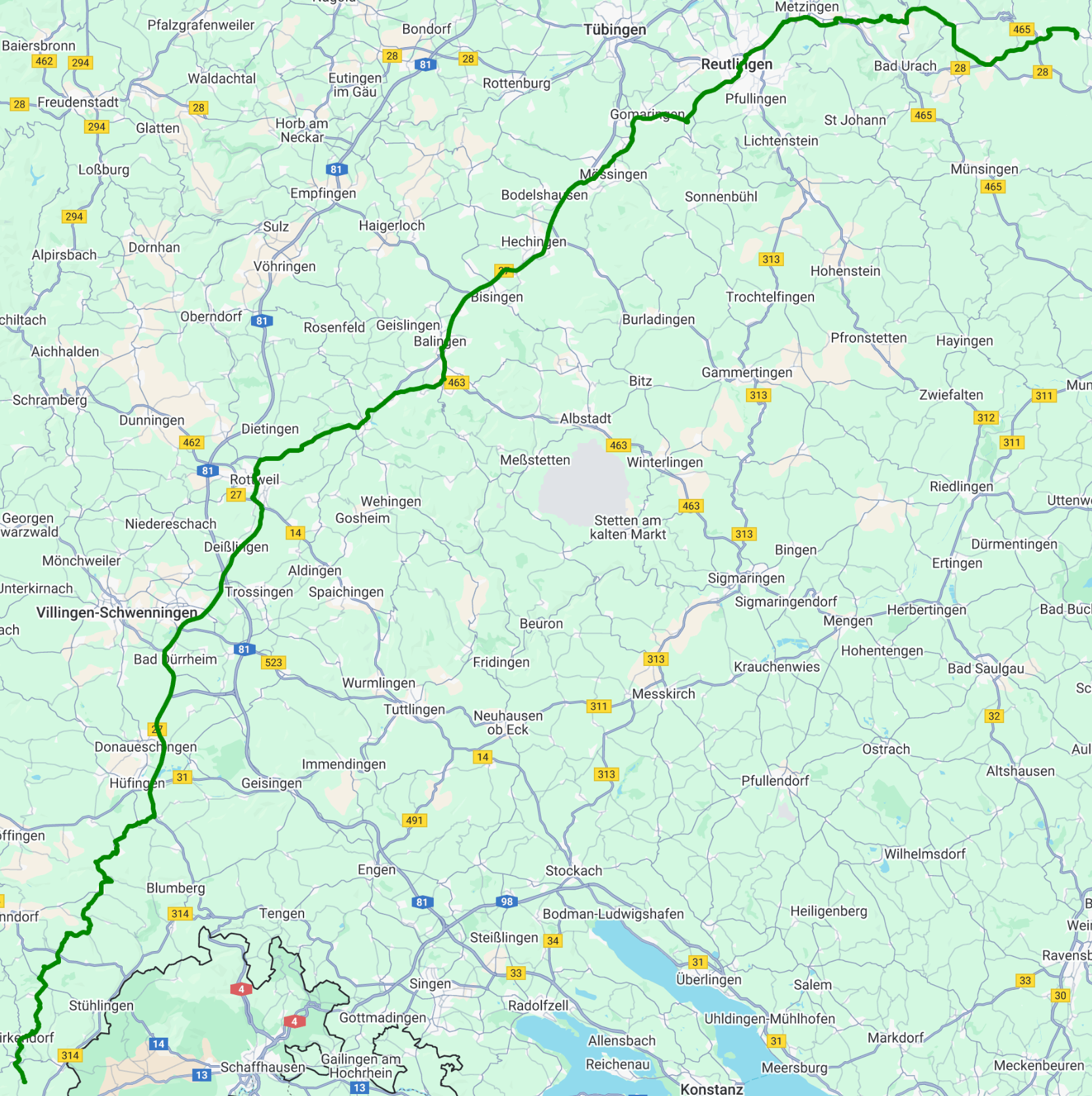}
    \caption{Input to algorithm}
    \end{subfigure}
    \begin{subfigure}{.45\columnwidth}
    \includegraphics[width=0.95\columnwidth]{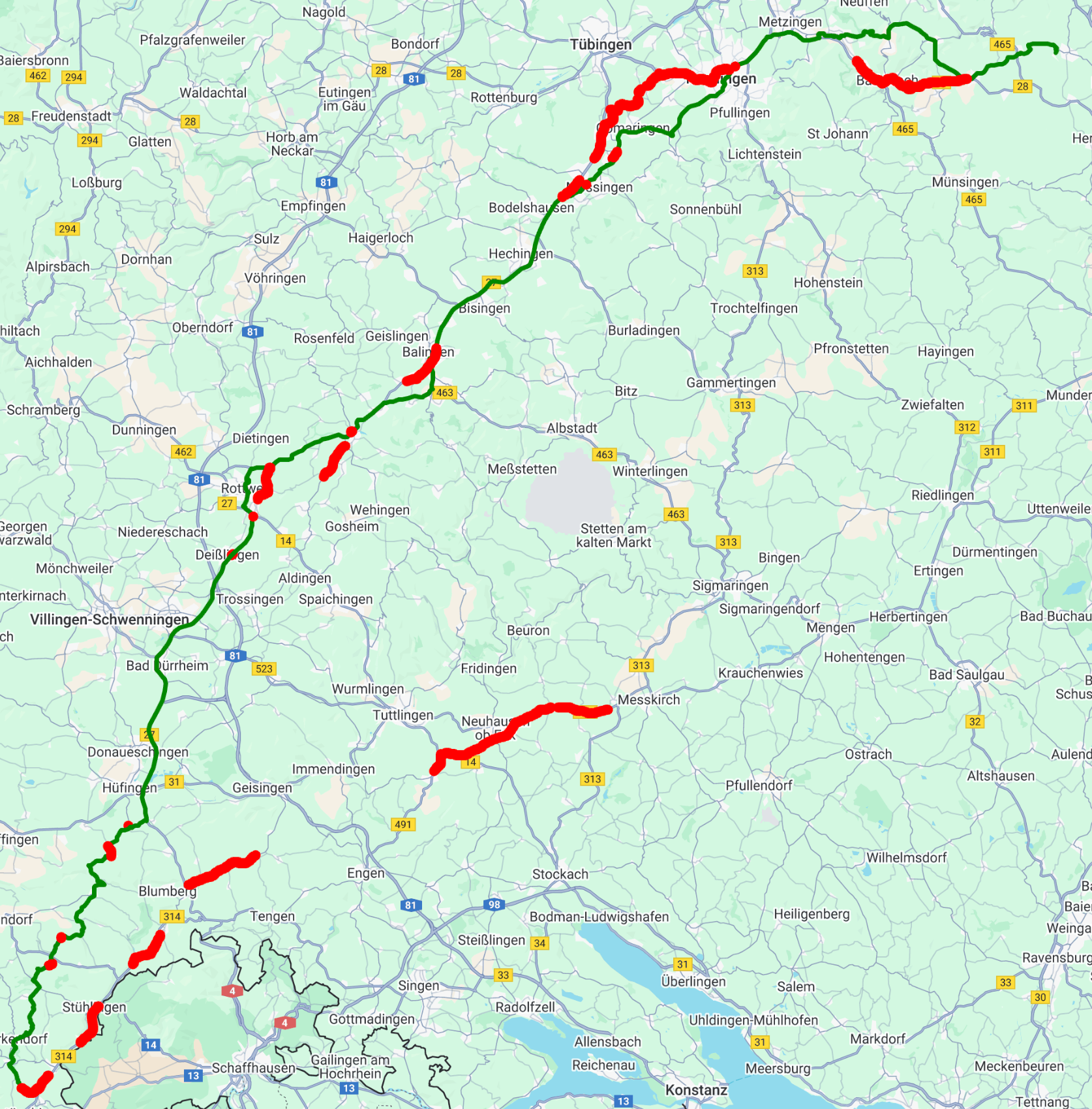}
    \caption{Output of algorithm}
    \end{subfigure}
    \begin{subfigure}{.45\columnwidth}
    \includegraphics[width=0.95\columnwidth]{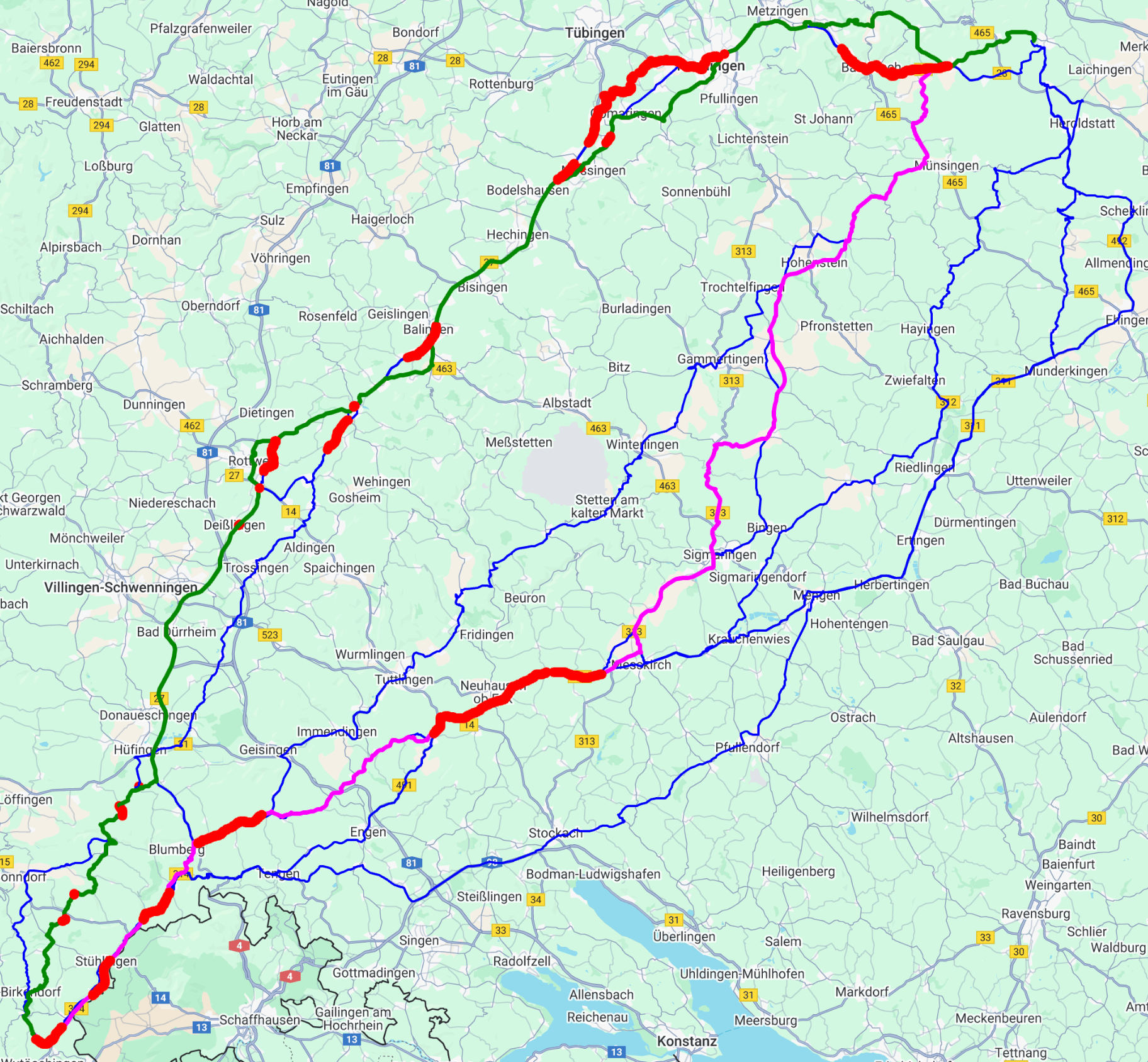}
    \caption{Scenario context}
    \end{subfigure}
    \caption{The explainer algorithm is just given the green path $P$ and outputs a list of red segments that explain the optimality of the green path. All of the explanation segments are relevant, as they all lie on paths used to construct the scenario, especially the shortest no-traffic path (pink path).}
    \label{fig:api}
\end{figure}

\section{Experiments}\label{sec:experiments}

In Section \ref{sec:examples}, we saw that SVEs yield compelling explanations in a variety of examples. In this section, we quantitatively study this phenomenon. We evaluate SVEs on datasets consisting of three different types of scenarios:
\begin{enumerate}
    \item 1 and 9-closure scenarios ($k=1,9$) in which closure arcs and non-path arcs are pliable.
    \item 1 and 9-closure scenarios ($k=1,9$) in which all arcs are pliable.
    \item 10-path ($k=9$) incident scenarios in which all arcs are pliable.
\end{enumerate}
For the first category, PBEs often uncover optimal or near-optimal explanations, so we focus on illustrating that SVEs do just as well in accordance with the conclusions of Theorems \ref{thm:two-path} and \ref{thm:k-path}, even though they do not apply for our chosen $\tau$ (Option 3). For the second category, we show that SVEs typically perform much better than PBEs, as the added flexibility only confounds the PBE algorithm. Finally, for the last category, we also show that SVEs produce a much smaller explanation than PBEs. In understanding the behavior of SVEs, sufficiency guarantees that an SVE is large enough to explain the choice of the desired path, so the question that all of our experiments answer is the following: how small are SVEs?

\subsection{Experiment Setup}

We use a road network derived in Open Street Maps (OSM). In this graph, each vertex represents a road segment and each arc represents a valid transition (e.g. turn) from one road segment to another. We consider two different regions:
\begin{enumerate}
    \item Baden-W{\"u}rttemburg, a state in southwestern Germany. Baden has 6.4 million nodes and 13.5 million arcs.
    \item Washington State, USA. Washington has 7.4 million nodes and 15.4 million arcs.
\end{enumerate}
In both regions, we use the following three sets of 100 queries:
\begin{enumerate}
    \item A short set, where the origin-destination (OD) pairs were drawn uniformly subject to the constraint that the crow's flight distance between them is between 1 and 3 miles.
    \item A medium set, where OD pairs were drawn randomly subject to being between 5 and 20 miles apart.
    \item A long set, where OD pairs were drawn randomly subject to being between 80 and 120 miles apart.
\end{enumerate}
In Washington only, we generated an additional set of 100 randomly drawn queries that are between 5 and 30 miles apart within the Seattle metro area.

%

Baden-W{\"u}ttemburg and Washington have diameter roughly 180 and 400 miles respectively as the crow flies. While Baden does not have many natural obstructions, Washington has many. In particular, some of the query pairs in the medium set actually have much longer routes than 20 miles due to the presence of Puget Sound. Our Seattle dataset also interacts with natural obstructions like Lake Washington, but in not as severe of a way.

For each of these query sets and each OD pair in each query set, we generate the three scenarios discussed earlier, obtaining three different experiments. In each of those experiments, we collect different metrics, so we continue our setup discussion in those sections.

\subsection{Road Closure Scenarios with Few Pliable Arcs}

\noindent 
{\bf Setup.}
For each query pair set $\mc Q$ with associated road network digraph $G$ (Baden or Washington) with (no-traffic) arc weights $\ell$ and each $(s,t)\in \mc Q$, we construct a closure scenario for $k=1$ or $k=9$ (2 or 10 paths) as defined in Section \ref{subsec:eval-setup}. In order to fully specify the algorithm, we need to choose $e_i$ for each $i\in \{1,2,\hdots,k\}$. Intuitively, $e_i$ should be the most important arc on $P_{i-1}$. Let $\ell_0(e) := \ell(e)$ for all $e$. We use the following heuristic:

\begin{enumerate}
    \item Let $L = |P_{i-1}|$ and let $W$ be the set of arcs in $P_{i-1}$ that are at least $f(L)$ hops away from the origin and destination of $P_{i-1}$ for some nondecreasing function $f$. This filtering reduces the number of OD pairs for which closed arcs form a cut. 
    \item We assign a road type number to each arc, derived from the ``highway'' field in OSM \cite{osmhighway}. Lower numbers indicate more arterial roads. For instance, 0 indicates a motorway and 1 indicates a motorway entrance or exit.
    \item Let $X$ denote the arcs in $W$ with minimum road type.
    \item Let $e_i$ be the arc in $X$ with maximum length, with ties broken by number of lanes and arbitrarily if necessary.
    \item Let $\ell_i$ be a function defined as follows: for all $e$ within 5 hops of $e_i$ along $P_{i-1}$, let $\ell_i(e) := 10000\ell_{i-1}(e)$. We call all such arcs $e$ \emph{closed arcs}. Let $C_i$ denote the set of these arcs. Otherwise, let $\ell_i(e) := \ell_{i-1}(e)$.
    \item Let $P_i$ denote the shortest $\ell_i$-weighted $s$-$t$ path in $G$.
\end{enumerate}

Label the query pair $(s,t)$ \emph{valid} if and only if $t$ is reachable from $s$ in $G$ and the $C_i$'s for $i\in \{1,2,\hdots,k+1\}$ are disjoint.\footnote{$C_{k+1}$ is defined using one additional iteration of the above loop just for technical reasons -- to delineate why $P\ne P_i$ for any $i\in \{0,1,\hdots,k-1\}$.} This algorithm often chooses $e_i$'s that are bridges for example, as bridges tend to be long highway segments. To complete the specification of the scenario, we need to define $u$ and $P$. Let $u(e) := \ell_k(e)$ for all $e\in P_0\cup \hdots\cup P_k$ and let $u(e) = 2\ell(e)$ for all other $e\in E(G)$ and let $P := P_k$. Note also that $P\ne P_i$ for any $i < k$, as otherwise $C_{k+1} = C_{i+1}$, which would mean that this query pair is invalid. The factor of 2 buffer in the definition of $u(e)$ for all $e\notin P_0\cup \hdots\cup P_k$ makes it so that the SVE could contain arcs $e$ outside of the $P_i$s, though Theorem~\ref{thm:k-path} prevents this for certain choices of $\tau$. We answer the following questions:

\begin{enumerate}
    \item For what fraction of $\mc Q$ is it the case that $\text{support}(w) \subseteq C$, where $w$ is the SVE and $C := C_1\cup C_2\cup\hdots\cup C_k$ is the set of all closed segments?
    \item For what fraction of $\mc Q$ is it the case that $\text{support}(x)\subseteq C$, where $x$ is the PBE?
\end{enumerate}

Theorems \ref{thm:two-path} and \ref{thm:k-path} would show that this fraction is 1 for SVEs, but the conditions of these theorems are not quite satisfied, due to a noncompliant choice of $\tau$. Thus, our experiments assess deviation from the results of these theorems.\\
%

\noindent {\bf Results.}
Note that the PBE algorithm exactly generates the paths $P_0,P_1,\hdots,P_{k-1},P_k$, so $x(e) = u(e)$ for all $e\in C_1\cup\hdots\cup C_k$ and $x(e) = \ell(e)$ otherwise. In our experiment, we also see that $w(e) = u(e)$ only for $e\in C_1\cup\hdots\cup C_k$ for all query pairs in $\mc Q$, so $\text{support}(w)\subseteq \text{support}(x)$ in all cases. The percent of valid pairs is listed and is always at least 77\%. Our results yield the conclusion of Theorem \ref{thm:k-path} in spite of the fact that the conditions are not satisfied.  We refer the reader to Table~\ref{tab:result1}.

\begin{table}[H]
\centering
\begin{tabular}{|l|c|c|c|c|c|}
\hline
\% (\textbf{exp} $\subseteq$ \textbf{C})  & \% \textbf{val} & \multicolumn{2}{c|}{\textbf{SVE}} & \multicolumn{2}{c|}{\textbf{PBE}}\\ \hline
\# paths & $\le 10$ & 2 & 10 & 2 & 10 \\ \hline
Seattle & 98\% & 100\% & 100\% & 100\% & 100\% \\ \hline
WA Short & 75\% & 100\% & 100\% & 100\% & 100\% \\ \hline
WA Med & 86\% & 100\% & 100\% & 100\% & 100\% \\ \hline
WA Long & 86\% & 100\% & 100\% & 100\% & 100\% \\ \hline
Baden Short & 77\% & 100\% & 100\% & 100\% & 100\% \\ \hline
Baden Med & 98\% & 100\% & 100\% & 100\% & 100\% \\ \hline
Baden Long & 98\% & 100\% & 100\% & 100\% & 100\% \\ \hline
\end{tabular}

\caption{
Percentage of valid query pairs for which the our SVE-based results and the baseline PBE are contained in the closed set. In this table and in Table \ref{tab:result2}, $C$ denotes the closed set in each example. We note that the SVE aligns with the closed set 100\% of the time and is thus always as small as the PBE. It is not typically strictly smaller, though, as the PBE is often nearly optimal.
}
\label{tab:result1}
\end{table}

\subsection{Road Closure Scenarios with Many Pliable Arcs}

\noindent 
{\bf Setup.} 
We repeat the experiment from the previous section with exactly one crucial modification. Instead of setting $u(e) = 2\ell(e)$ just for $e\notin P_0\cup P_1\cup \hdots\cup P_k$, we set $u(e) := 2\ell(e)$ for all $e$ with the property that $\ell_k(e) = \ell(e)$. This makes it so that \emph{every} arc in $G$ is pliable, whereas the previous setting made it so that only $e_i$s and nearby arcs along the $P_i$s were pliable (besides all non-path arcs). \\

\noindent {\bf Results.} 
We compute the same metric as before: percentage of valid query pairs for which the explanation is contained in the closed set. SVEs perform significantly better than PBEs. When $|P_0| \ge 12$ \footnote{12 here stems from the 5 hops choice earlier.} -- which is always the case in our datasets -- PBEs contain all pliable arcs on $P_0$. Thus, PBEs always contain arcs outside of the closed set leading to larger explanations. On the other hand, SVEs almost always just contain closed arcs (see Table ~\ref{tab:result2} and Figure~\ref{fig:worst-baden-medium}).


\begin{table}[H]
\centering
\begin{tabular}{|l|c|c|c|c|c|}
\hline
\% (\textbf{exp} $\subseteq$ \textbf{C})  & \% \textbf{val} & \multicolumn{2}{|c|}{\textbf{SVE}} & \multicolumn{2}{|c|}{\textbf{PBE}}\\ \hline
 \# paths & $\le 10$ & 2 & 10 & 2 & 10 \\ \hline
Seattle & 98\% & 100\% & 100\% & 0\% & 0\% \\ \hline
WA Short & 75\% & 100\% & 96.0\% & 0\% & 0\% \\ \hline
WA Med & 86\% & 100\% & 100\% & 0\% & 0\% \\ \hline
WA Long & 86\% & 100\% & 100\% & 0\% & 0\% \\ \hline
Baden Short & 77\% & 100\% & 93.5\% & 0\% & 0\% \\ \hline
Baden Med & 98\% & 100\% & 98.0\% & 0\% & 0\% \\ \hline
Baden Long & 98\% & 100\% & 100\% & 0\% & 0\% \\ \hline
\end{tabular}

\vspace{0.05in}
\caption{
Higher $u(e)$ on some arcs. This additional flexibility makes SVEs smaller than PBEs. 
}
\label{tab:result2}
\end{table}

\begin{figure}[H]
    \centering
    \includegraphics[width=0.5\columnwidth]{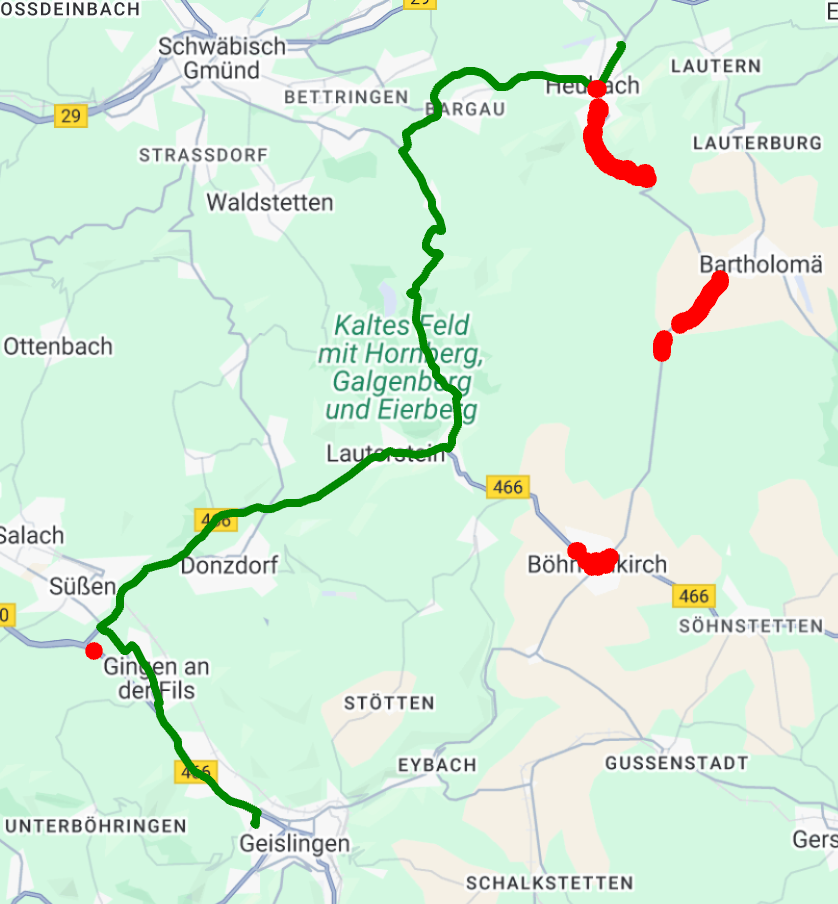}
    \caption{The minimizing example for Baden Medium. Sometimes, the SVE does contain segments outside of the closed arcs set.
    }
    \label{fig:worst-baden-medium}
\end{figure}

\subsection{Incident Scenarios}
\noindent 
{\bf Setup.}
We construct incident scenarios with $\gamma = 1.1$, $u(e) = 2\ell(e)$ for all $e\notin P_0\cup P_1\cup \hdots\cup P_{k-1}\cup P_k$, and $u(e) = y_e$ otherwise, for the $y_e$ defined in Section \ref{subsec:eval-setup}. Let $X$ denote the set of arcs $e\in E(G)$ for which $y_e \ne \ell(e)$. $X$ is called the \emph{penalized set} and $X = P_0\cup \hdots\cup P_{k-1}$.\\

\noindent 
{\bf Results.}
We evaluate SVEs on two metrics:

\begin{enumerate}
    \item The percentage of SVE segments that are also in the paths $P_0\cup P_1\cup\hdots\cup P_{k-1}$.
    \item The ratio between the support size of the SVE and $|P_0\cup \hdots\cup P_{k-1}|$.
\end{enumerate}

Both metrics capture smallness: the first tests precision of the SVE, while the second tests recall. We just report the minimum value of the first quantity for each dataset, as the minima are quite high. For the second quantity, we report percentiles. SVEs return a small subset (median around 10\%; see Table ~\ref{tab:result3}) of the penalties used to create the scenario and contain $>$83\% of relevant arcs for all datasets.

\begin{table}[H]
\centering
\begin{tabular}{|l|c|c|c|c|c|}
\hline
 & \textbf{\% val} & \textbf{\parbox{1.5cm}{\centering\% exp in paths}} & \multicolumn{3}{c|}{\parbox{2.5cm}{\centering\textbf{\# exp arcs / \# path arcs}}}\\ \hline
percentile &  & min & 50\% & 90\% & max \\ \hline
Seattle & 100\% & 87.8\% & 0.113 & 0.225 & 0.318 \\ \hline
WA Short & 94\% & 83.3\% & 0.094 & 0.299 & 0.434 \\ \hline
WA Med & 97\% & 95.7\% & 0.114 & 0.272 & 0.433 \\ \hline
WA Long & 86\% & 94.2\% & 0.111 & 0.209 & 0.343 \\ \hline
Ba Short & 96\% & 91.3\% & 0.07 & 0.348 & 0.925 \\ \hline
Ba Med & 98\% & 94.9\% & 0.11 & 0.246 & 0.421 \\ \hline
Ba Long & 98\% & 96.8\% & 0.111 & 0.24 & 0.381 \\ \hline
\end{tabular}
\vspace{0.05in}

\caption{Incident scenarios. SVEs are nearly a subset of the penalized arcs $X$, while being a small subset of those arcs. 
}
\label{tab:result3}
\end{table}

\section{Conclusion}\label{sec:conclusion}

Routing engines are used by billions of users per day to route in the presence of traffic, road closures, and other dynamic conditions. In this context, they often generate routes that deviate from the standard route for reasons that are difficult to discern. In this work, we introduced a linear programming-based technique for identifying the conditions that are relevant to a routing engine's choice of route. We gave theoretical and empirical justification for the effectiveness of these explanations.

This technique can be employed for many optimization problems that admit linear programming-based solutions. Examples include TSP, clustering, etc. Dynamic conditions are a central aspect to many of these optimization problems. 
This technique might be adaptable to these settings.

In this work, we focused on traffic-based scenarios, but our framework can apply to explain why routes optimizing other objectives, like scenic routes, safe routes, and minimum length routes, differ from the minimum duration route. 

\bibliographystyle{plain}
\bibliography{refs}

\appendix

\section{A Fast Combinatorial Algorithm}\label{app:algo}

The general idea of our combinatorial algorithm to find SVEs is to write the linear programming dual of the cut formulation, and then to consider the duality gap between the values of these two formulations. We show how a flow-based approach can drive this gap down to zero, giving us both the optimal primal and dual solutions. Recall the cut formulation:

\begin{align}
    \min \sum_{e \in E(G)} & \tau(e)(w_e - \ell(e)) && \tag{LP1} \label{LP1aa}\\
    w_e &\le u(e) && \forall e \in E(G) \label{eq:1a}\\
    w_e &\ge \ell(e) && \forall e\in E(G) \label{eq:2a} \\
    d_v - d_u &\leq w_e & &\forall e = (u,v) \in E(G) \label{eq:3a} \\
    d_v - d_u &= w_e && \forall e = (u,v) \in P. \label{eq:4a}
\end{align}

To begin, we first write the linear programming dual of the cut formulation. We call this dual program the \emph{flow formulation}, emphasizing the analogy between the flow/cut duality in network flows. It has a variable $a_e$ corresponding to each lower bound constraint, a variable $b_e$ corresponding to each upper bound constraint, and a variable $f_e$ corresponding with the two distance constraints.

\begin{align}
    \max \sum_{e \in E(G)} & (\ell(e)(a_e - \tau(e)) - u(e)b_e) && \tag{LP2} \\
    f_e &\ge 0 && \forall e \in E(G)\setminus P\\
    a_e,b_e &\ge 0 && \forall e \in E(G)\\
    \sum_{e\in \delta^+(v)} f_e &= \sum_{e\in \delta^-(v)} f_e && \forall v \in V\\
    a_e - b_e + f_e &= \tau(e) && \forall e \in E(G).
\end{align}

We now discuss some sanity checks related to this formulation. Note that if $u(e) < \ell(e)$, the objective value of this program is unbounded, as one can set $a_e = b_e$ to some arbitrarily large positive numbers. Furthermore, for any arc $e$, in an optimal solution, it is the case that either $a_e = 0$ or $b_e = 0$, as if both $a_e > 0$ and $b_e > 0$, the fact that $\ell(e) \le u(e)$ means that the objective value can be increased by decrementing both $a_e$ and $b_e$ by the same amount. 

Given feasible solutions $(w,d)$ and $(f,a,b)$ for the cut and flow formulations respectively, the \emph{duality gap} between the objective values of these two solutions can be written as
\begin{align*}
&\sum_e \tau(e)(w_e - \ell(e)) - \sum_e (\ell(e)(a_e - \tau(e)) - u(e)b_e) \\
&= \sum_{e=(u,v)} w_e(\tau(e) - (a_e - b_e + f_e)) \\
&+ \sum_e (w_e - \ell(e))a_e \\
&+ \sum_e (u(e) - w_e)b_e \\
&+ \sum_{e=(u,v)} (w_e - d_v + d_u)f_e\\
&+ \sum_{v\in V} d_v\left(\sum_{e\in \delta^-(v)} f_e - \sum_{e\in \delta^+(v)} f_e\right).
\end{align*}

Notice that this duality gap is always nonnegative when all constraints are satisfied. Our goal is to construct a pair of solutions for which the duality gap is zero, which ensures that both the primal and dual solutions are optimal. We now show how to get zero duality gap via an augmenting paths-like algorithm. For a graph $\wh{G} = (\wh{V},\wh{E})$ and vectors $\wh{\kappa}\in \mathbb{R}^{\wh{E}}$, $\wh{c}\in \mathbb{R}_{\ge 0}^{\wh{E}}$, and $\wh{W}\in \mathbb{R}$, define the following \emph{residual flow formulation} to facilitate the construction of augmenting paths, where $\wh{E}_{high}$ and $\wh{E}_{low}$ are the sets of arcs $e$ with $\wh{c}(e) = 0$ and $\wh{c}(e) > 0$ respectively, representing high and low flow (not capacity) respectively:

\begin{align}
    \max \wh{W} + \sum_e & \wh{\kappa}(e)\p f_e &&\\
    \p f_e &\ge 0 && \forall e \in \wh{E}\\
    \sum_{e\in \delta^+(v)} \p f_e &= \sum_{e\in \delta^-(v)} \p f_e && \forall v \in \wh{V}\\
    \p f_e &\le \wh{c}(e) && \forall e\in \wh{E}_{low}.
\end{align}

Only arcs in $\wh{E}_{low}$ are capacity constrained, as ones in $\wh{E}_{high}$ can be increased arbitrarily but at the cost of increasing $b_e$. Note that in this formulation, $\p f_e$ needs to be nonnegative on all arcs, not just the arcs outside of $P$. This is done because the residual problem models changes that can be made to the current flow formulation solution. $\wh{E}$ is also not necessarily just the set of original arcs $E$. It also includes \emph{residual arcs} that are reversals of arcs from the original flow formulation. See Figure \ref{fig:residual} for an algorithm that constructs the residual flow formulation associated with a solution to the flow formulation.

\begin{figure}
\colorbox{gray!20}{
\begin{minipage}{\textwidth}
\textbf{algorithm} \Residual($G=(V,E),\ell,u,f,a,b$):

\begin{enumerate}
\item If there exists an $e\in E$ for which both $a_e > 0$ and $b_e > 0$, throw an exception.
\item Let $\wh{E}_{res}$ denote the set of pairs $(v,u)\in V\times V$ for which $(u,v)\in E$ and $f_{(u,v)} > 0$. Let $\wh{E}_P$ denote the set of pairs $(v,u)\in V\times V$ for which $(u,v)\in P$.
\item Let $\wh{V} \gets V$ and $\wh{E} \gets E\cup \wh{E}_{res}\cup \wh{E}_P$. Initialize $F\gets \emptyset, \wh{F}_{res}\gets \emptyset, \wh{F}_P\gets \emptyset$.
\item Initialize $\wh{c}(e)\gets -\infty, \wh{\kappa}(e) = -\infty$, and $\wh{W}\gets 0$ and define them as follows. For each $e=(u,v)\in \wh{E}$,
    \begin{enumerate}
    \item If $e\in E$, do the following:
        \begin{enumerate}
        \item If $f_e < \tau(e)$, $a_e = \tau(e) - f_e > 0$ and $b_e = 0$, so increment $\wh{W}\gets \wh{W} - \ell(e)f_e$, let $\wh{\kappa}(e) \gets \max(\wh{\kappa}(e),-\ell(e))$, and update $\wh{c}(e)\gets \tau(e) - f_e$, add $e$ to $F$, and remove $e$ from $\wh{F}_{res}$ and $\wh{F}_P$ if $\wh{\kappa}(e)$ changed.
        \item Otherwise, $f_e \ge \tau(e)$, in which case $a_e = 0$ and $b_e = f_e - \tau(e)$, so increment $\wh{W}\gets \wh{W} + (u(e) - \ell(e))\tau(e) - u(e)f_e$, let $\wh{\kappa}(e) \gets \max(\wh{\kappa}(e),-u(e))$, and update $\wh{c}(e)\gets 0$, add $e$ to $F$, and remove $e$ from the other $F$-sets if $\wh{\kappa}(e)$ changed.
        \end{enumerate}
    \item If $e=(v,u)\in \wh{E}_{res}$, let $e'=(u,v)$. By definition, $e'\in E$. Do the following:
        \begin{enumerate}
        \item If $0 < f_{e'} \le \tau(e')$, let $\wh{\kappa}(e) \gets \max(\wh{\kappa}(e),\ell(e'))$ and update $\wh{c}(e)\gets f_{e'}$, add $e$ to $\wh{F}_{res}$, and remove $e$ from the other $F$-sets if $\wh{\kappa}(e)$ changed.
        \item If $f_{e'} > \tau(e')$, let $\wh{\kappa}(e) \gets \max(\wh{\kappa}(e),u(e'))$ and update $\wh{c}(e)\gets f_{e'} - \tau(e')$, add $e$ to $\wh{F}_{res}$, and remove $e$ from the other $F$-sets if $\wh{\kappa}(e)$ changed.
        \end{enumerate}
    \item If $e=(v,u)\in \wh{E}_P$, let $e'=(u,v)$. By definition, $e'\in P$. Do the following:
        \begin{enumerate}
        \item If $f_{e'} \le 0$, let $\wh{\kappa}(e) \gets \max(\wh{\kappa}(e),\ell(e'))$ and update $\wh{c}(e)\gets 0$, add $e$ to $\wh{F}_P$, and remove $e$ from the other $F$-sets if $\wh{\kappa}(e)$ changed.
        \end{enumerate}
    \end{enumerate}
\item \textbf{return} ($\wh{G}:=(\wh{V},\wh{E}),\wh{\kappa},\wh{c},\wh{W},F,\wh{F}_{res},\wh{F}_P$)
\end{enumerate}
\end{minipage}
}
\caption{Algorithm \Residual}
\label{fig:residual}
\end{figure}

We say that a solution $\p f$ to a residual flow formulation \Residual
given input $(G,\ell,u,f,a,b$) is \emph{nondegenerate} iff for any $(u,v)\in F$ for which both $(u,v)\in F$ and $(v,u)\in \wh{F}_{res}\cup \wh{F}_P$, $\partial f_{(u,v)} = 0$ or $\partial f_{(v,u)} = 0$. The following claim holds:

\begin{claim}\label{clm:nondegenerate}
Consider a flow formulation $(G,\ell,u)$ and a feasible solution $(f,a,b)$. Any feasible solution $\p f$ to \Residual($G,\ell,u,f,a,b$) can be converted in linear time to one with higher ($\ge$) objective value that is also nondegenerate.
\end{claim}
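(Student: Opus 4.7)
The plan is to build the nondegenerate $\p f'$ from $\p f$ by a single pass of pairwise cancellations. Whenever a pair $(u,v)\in F$ and $(v,u)\in \wh F_{res}\cup \wh F_P$ satisfies $\p f_{(u,v)}>0$ and $\p f_{(v,u)}>0$, I would set $\epsilon := \min(\p f_{(u,v)}, \p f_{(v,u)})$ and decrement both coordinates by $\epsilon$. After one pass every such pair has at least one zero coordinate, which is exactly nondegeneracy; and because a cancellation only decreases coordinates, no new degenerate pairs are created along the way. Each arc of $E$ has at most one reverse partner in $\wh E_{res}\cup \wh E_P$ and is touched $O(1)$ times, so the total work is $O(|E|)$.

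Next I would verify that the operation preserves feasibility of \Residual. Flow conservation at each endpoint is maintained because the decrements on $(u,v)$ and $(v,u)$ together reduce both the outflow and inflow at $u$ by $\epsilon$, and symmetrically at $v$. Nonnegativity of $\p f$ is preserved by the choice of $\epsilon$, and the capacity constraint $\p f_e \le \wh c(e)$ for $e\in \wh E_{low}$ only becomes easier to satisfy when $\p f_e$ is decreased.

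The change in the objective under a single cancellation is $-\epsilon(\wh\kappa((u,v))+\wh\kappa((v,u)))$, so the remaining task is to show $\wh\kappa((u,v))+\wh\kappa((v,u))\le 0$ for every degenerate pair. This is a finite case check against the weight assignments in \Residual. Writing $e=(u,v)\in E$, the forward weight is $\wh\kappa(e)=-\ell(e)$ when $f_e<\tau(e)$ and $\wh\kappa(e)=-u(e)$ when $f_e\ge \tau(e)$. For the reverse arc $(v,u)$, the weight is $\ell(e)$ or $u(e)$ from the $\wh E_{res}$ branch (depending on whether $0<f_e\le \tau(e)$ or $f_e>\tau(e)$), or $\ell(e)$ from the $\wh E_P$ branch (which requires $e\in P$ and $f_e\le 0$). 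The only pairing that could in principle produce a strictly positive sum is $(-\ell(e),\,u(e))$, but that would simultaneously force $f_e<\tau(e)$ and $f_e>\tau(e)$, a contradiction. Every remaining compatible pairing sums to $0$ or to $\ell(e)-u(e)\le 0$. I expect this tabular verification to be the main, though very mild, obstacle in the write-up; the only subtle entry is the boundary case $f_e=\tau(e)$, which routes the forward arc into the $-u$ branch while the reverse arc from $\wh E_{res}$ is still in the $\ell$ branch, giving the harmless sum $\ell(e)-u(e)$.
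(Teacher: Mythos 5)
Your proposal is correct and follows essentially the same approach as the paper: cancel each degenerate $2$-cycle by decrementing both coordinates, observe feasibility is preserved, and check case by case on $f_e$ versus $\tau(e)$ that the objective change $-\epsilon(\wh\kappa(e)+\wh\kappa(e'))$ is nonnegative. (One small plus on your side: you correctly take $\epsilon=\min(\p f_e,\p f_{e'})$, whereas the paper's write-up has a typo with $\max$.)
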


\begin{proof}
Consider any $e=(u,v)\in F$ for which $e'=(v,u)\in \wh{F}_{res}\cup \wh{F}_P$, $\p f_e > 0$, and $\p f_{e'} > 0$. Subtracting $\delta$ for some $0 < \delta \le \max(\p f_e, \p f_{e'})$ maintains feasibility and increases the objective value by $(\ell(e) - \ell(e))\delta = 0$ if $f_e < \tau(e)$, $(u(e) - \ell(e))\delta > 0$ if $f_e = \tau(e)$, and $(u(e) - u(e))\delta = 0$ if $f_e > \tau(e)$. Applying to all 2-cycles results in a solution that is nondegenerate and has higher objective value, as desired.
\end{proof}

\begin{figure}
\colorbox{gray!20}{
\begin{minipage}{\textwidth}
\textbf{algorithm}
\Modify($G=(V,E),\ell,u,f,a,b,\partial f$)\\

Note: $\p f$ is assumed to be a feasible nondegenerate solution to \Residual($G,\ell,u,f,a,b$). Furthermore, $F,\wh{F}_{res}$, and $\wh{F}_P$ form a partition of $\wh{E}$ by definition.\\

\begin{enumerate}
\item For each $e\in F$ with $\p f_e > 0$, let $f_e' \gets f_e + \p f_e$.
\item For each $e=(u,v)\in E$ for which $e'=(v,u)\in \wh{F}_{res}\cup \wh{F}_P$ and $\p f_{e'} > 0$, let $f_e' \gets f_e - \p f_{e'}$. By the nondegeneracy of $\p f$, this set of arcs $e$ is disjoint from the previous set of modified arcs $e$.
\item For all other $e\in E$, let $f_e' \gets f_e$.
\end{enumerate}

For all $e\in E$,
\begin{enumerate}
\item If $f_e' \le \tau(e)$, let $a_e' \gets \tau(e) - f_e'$ and $b_e' \gets 0$.
\item Otherwise, let $a_e' \gets 0$ and $b_e' \gets f_e' - \tau(e)$.
\end{enumerate}

\textbf{return} $(f',a',b')$.

\end{minipage}
}
\caption{Algorithm \Modify}
\end{figure}

We now show that a nondegenerate nonzero solution to the residual problem can be used to improve the current flow formulation solution to one with a higher objective value. The resulting objective value is equal to the objective value of the residual solution:

\begin{claim}\label{clm:feasible}
Consider a flow formulation $(G,\ell,u)$ and a feasible solution $(f,a,b)$. For any feasible nondegenerate solution $\p f$ to the residual flow formulation \Residual($G,\ell,u,f,a,b$), the output $(f',a',b')$ of the algorithm \Modify($G,\ell,u,f,a,b,\p f$) is a feasible solution to the flow formulation $(G,\ell,u)$ with the same objective value as the solution $\p f$ to the residual flow formulation.
\end{claim}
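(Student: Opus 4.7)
The plan is to check in turn that (i) $(f', a', b')$ is feasible for the flow formulation, and (ii) its objective value equals $\wh{W} + \sum_e \wh{\kappa}(e)\partial f_e$. First, I would record that the nondegeneracy of $\partial f$ ensures that for each original edge $e = (u,v) \in E$, at most one of the two update rules in \Modify fires, so $f'$ is well-defined. The assignment of $(a', b')$ via the case split on $f'_e$ versus $\tau(e)$ then gives $a'_e, b'_e \geq 0$ and $a'_e - b'_e + f'_e = \tau(e)$ by construction.

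For flow conservation at a vertex $v$, I would show that the net change $\sum_{e \in \delta^+(v)} (f'_e - f_e) - \sum_{e \in \delta^-(v)} (f'_e - f_e)$ equals the net outflow of $\partial f$ at $v$ in the residual graph $\wh{G}$, which vanishes because $\partial f$ is feasible in \Residual. This is a direct bookkeeping step that maps each residual edge (forward in $F$, or reverse in $\wh{F}_{res} \cup \wh{F}_P$) to the appropriate original edge and sign. Non-negativity $f'_e \geq 0$ for $e \notin P$ follows from the capacities set in \Residual: only reverse edges in $\wh{F}_{res}$ can decrease $f_e$ (since $\wh{F}_P$-reverses correspond to $P$-edges), and the capacity is set to $f_e$ in the $f_e \leq \tau(e)$ branch and $f_e - \tau(e)$ in the $f_e > \tau(e)$ branch, ensuring $f'_e \geq 0$ in either case.

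For the objective, I would first observe that $\wh{W}$, as accumulated in \Residual, equals the objective value of $(f,a,b)$: when $f_e < \tau(e)$ the contribution $\ell(e)(a_e - \tau(e)) - u(e)b_e$ simplifies to $-\ell(e)f_e$, and when $f_e \geq \tau(e)$ to $(u(e) - \ell(e))\tau(e) - u(e)f_e$, matching exactly the increments to $\wh{W}$. It then suffices to show that the change in objective from $(f,a,b)$ to $(f',a',b')$ equals $\sum_e \wh{\kappa}(e)\partial f_e$. This follows from a per-edge case analysis over the three residual edge types ($F$, $\wh{F}_{res}$, $\wh{F}_P$), stratified by the regime of $f_e$ relative to $\tau(e)$: the capacities in \Residual keep $f'_e$ on the same side of $\tau(e)$ as $f_e$, and $\wh{\kappa}(e)$ is set to be precisely the slope of the piecewise-linear contribution in that regime (either $-\ell(e)$ or $-u(e)$ for forward increases, and $+\ell(e)$ or $+u(e)$ for reverse decreases).

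The main obstacle is the bookkeeping: confirming that the cases all line up, and that the definitions of $\wh{\kappa}$ and $\wh{c}$ are exactly those making the residual objective a first-order Taylor expansion of the dual objective around $(f,a,b)$ within the feasible range of $\partial f$. Nondegeneracy is essential throughout---without it, cancelling residual flow on an $F$-edge and its reverse would leave $f'$ unchanged yet still contribute to the residual objective via $\wh{\kappa}$, breaking the equality we rely on.
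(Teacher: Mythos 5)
Your plan matches the paper's proof: both establish that $\wh{W}$ equals the flow-formulation objective of $(f,a,b)$, then run a per-edge case analysis (stratified by the regime of $f_e$ relative to $\tau(e)$ and by the residual edge type $F$, $\wh F_{res}$, $\wh F_P$) showing that the capacity bounds $\wh c$ keep $f'_e$ in the same piece of the piecewise-linear objective, so each edge's contribution to the objective change is exactly $\wh\kappa(e)\,\partial f_e$. You are somewhat more explicit than the paper about flow conservation and the role of nondegeneracy in making $f'$ well-defined, but this is elaboration of the same argument, not a different route.
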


\begin{proof}
We check feasibility and objective matching together. First, notice that $\wh{W}$ is equal to the flow formulation objective value of $f$, so we just have to check that the residual flow formulation objective value of $\p f$ is equal to the difference between the flow formulation objective values of $f$ and $f'$. Feasibility is guaranteed in each step by the choice of $\wh{c}$. We now elaborate on this intuition.

For each $e\in E$, we start by verifying that $f_e' \ge 0$. If $e\in F$ and $\p f_e > 0$, then $f_e' = f_e + \p f_e$. If $f_e < \tau(e)$, then $\wh{c}(e) = \tau(e) - f_e$ and $f_e' = f_e + \p f_e \le f_e + (\tau(e) - f_e) = \tau(e)$, so $f_e'$ lies in the range for which $b_e' = 0$, $a_e' = a_e - \p f_e$, and $e$ contributes a flow formulation objective value difference of $\ell(e)(a_e' - a_e) = -\ell(e)\p f_e = \wh{\kappa}(e)\p f_e$, as desired. If $f_e \ge \tau(e)$, then $f_e'$ always lies in the range for which $a_e' = 0$, $b_e' = b_e + \p f_e$, and $e$ contributes a flow formulation objective value difference of $-u(e)(b_e' - b_e) = -u(e)\p f(e) = \wh{\kappa}\p f(e)$ as desired. This completes the case where $e\in F$.

Now, suppose that $e=(u,v)$ has the property that $e'=(v,u)\in \wh{F}_{res} \cup \wh{F}_P$ and $\p f_{e'} > 0$. If $f_e > \tau(e)$, then $\wh{c}(e') = f_e - \tau(e) > 0$, $e\in E_{low}$, and $f_e' = f_e - \p f_{e'} \ge f_e - \wh{c}(e') = \tau(e')$, so $f_e'$ and $f_e$ lie in the range for which $a_e' = a_e = 0$, and $e$ contributes a flow formulation objective value difference of $-u(e)(b_e' - b_e) = u(e)\p f_{e'} = \wh{\kappa}(e')\p f_{e'}$ as desired. If $0 < f_e \le \tau(e)$, then $\wh{c}(e) = f_e > 0$, $e\in E_{low}$, and $f_e' = f_e - \p f_{e'} \ge f_e - \wh{c}(e') = 0$, so $f_e'$ and $f_e$ lie in the range for which $b_e' = b_e = 0$, and $e$ contributes a flow formulation objective value difference of $\ell(e)(a_e' - a_e) = \ell(e)(\tau(e) - f_e' - (\tau(e) - f_e)) = \ell(e)\p f_{e'} = \wh{\kappa}(e')\p f_{e'}$ as desired. If $f_e \le 0$, then $e\in P$, which means that feasibility in the range with $\wh{\kappa}(e) = \ell(e')$ range is guaranteed and the desired objective change is achieved. This completes all cases in which $f_e' \ne f_e$ and all other cases have no flow formulation objective change contribution, as desired.
\end{proof}

Thus, whenever we can find a feasible residual solution with better objective value, it can be used to improve the current flow formulation objective. Each residual problem solution is a circulation. Circulations can always be written as sums of cycles. In particular, whenever the residual graph contains a cycle with positive $\wh{\kappa}$-weight, the objective value can be increased. When the residual graph does not contain such a cycle, we show that the algorithm \CutCertificate produces a cut formulation solution with duality gap 0, certifying that the current flow formulation solution is optimal.

\begin{figure}
\colorbox{gray!20}{
\begin{minipage}{\textwidth}
\CutCertificate($G=(V,E),P,\ell,u,f,a,b$)\\

Note: we assume that the residual graph $\wh{G}$ of \Residual($G,\ell,u,f,a,b$) has no $\wh{\kappa}$-weighted positive cycles. (Otherwise, \Residual has a nonzero optimal solution and calling \Modify will produce a better solution to the flow formulation.)

\begin{enumerate}
\item Let $s$ and $t$ denote the origin and destination of the path $P$. By assumption, the graph $\wh{G} = (\wh{V}, \wh{E})$ has no $-\wh{\kappa}$-weighted negative cycles, so distances from $s$ are well-defined (non-negative-infinite).
\item Run Bellman-Ford from $s$. For each $v\in \wh{V}$, let $d_v$ denote the $-\wh{\kappa}$-weighted distance from $s$ to $v$ in $\wh{G}$.
\item For each $e\in E\setminus P$ with $f_e = 0$, let $w_e = \ell(e)$. For all other $e = (u,v)\in E$, let $w_e = d_v - d_u$.
\end{enumerate}
\textbf{return} $(w,d)$.

\end{minipage}
}
\caption{Algorithm \CutCertificate}
\end{figure}

We now show that the algorithm \CutCertificate returns the desired cut solution:

\begin{claim}\label{clm:optimal}
For a graph $G$, path $P$, weight functions $\ell$ and $u$, and a feasible flow formulation solution $(f,a,b)$, $(w,d) \gets $ \CutCertificate($G,P,\ell,u,f,a,b$) is a feasible cut formulation solution for which the pair of solutions $(f,a,b),(w,d)$ has duality gap 0.
\end{claim}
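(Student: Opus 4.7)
The plan is to verify (i) that $(w,d)$ is feasible for the cut formulation and (ii) that the duality gap between $(w,d)$ and $(f,a,b)$ is zero. Starting from the duality-gap expression derived earlier in the paper, the two ``constraint-enforcing'' terms $\sum_e w_e(\tau(e) - (a_e - b_e + f_e))$ and $\sum_v d_v (\sum_{\delta^-(v)} f_e - \sum_{\delta^+(v)} f_e)$ vanish because $(f,a,b)$ is feasible for the flow formulation. What remains is the sum of complementary-slackness products $\sum_e (w_e - \ell(e))a_e + \sum_e (u(e) - w_e)b_e + \sum_{e=(u,v)} (w_e - d_v + d_u)f_e$, which I will show is termwise zero while simultaneously checking the cut-formulation constraints.

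The engine of the proof is the Bellman--Ford inequality: since $\wh G$ has no $(-\wh\kappa)$-weighted negative cycle, the distances $d_v$ are well-defined and satisfy $d_v \le d_u - \wh\kappa(e)$ for every arc $e = (u,v)$ in $\wh G$. I will translate this into bounds on $d_v - d_u$ for each original arc $e \in E$ by reading off the $\wh\kappa$ values that \Residual attaches to the forward copy of $e$ and, when present, to its reverse residual copy. Specifically, the forward arc contributes the upper bound $d_v - d_u \le \ell(e)$ when $f_e < \tau(e)$ and $d_v - d_u \le u(e)$ when $f_e \ge \tau(e)$; a reverse arc in $\wh E_{res}$ (present iff $f_e > 0$) contributes the lower bound $d_v - d_u \ge \ell(e)$ when $f_e \le \tau(e)$ and $d_v - d_u \ge u(e)$ when $f_e > \tau(e)$; and a reverse arc in $\wh E_P$ (present when $e \in P$ and $f_e \le 0$) contributes $d_v - d_u \ge \ell(e)$.

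With these bounds in hand, I will do a case analysis on each arc $e=(u,v)$ driven by the relative size of $f_e$ and $\tau(e)$ and by whether $e\in P$. The cases split into: (1) $e\notin P$ and $f_e=0$, so $w_e=\ell(e)$, $b_e=0$, and the forward-arc bound $d_v-d_u\le \ell(e) = w_e$ supplies the needed constraint $d_v - d_u \le w_e$; (2) $0 < f_e < \tau(e)$, where the two matching bounds pin $d_v - d_u = \ell(e)$, so $w_e = \ell(e)$ kills $(w_e-\ell(e))a_e$ with $a_e = \tau(e) - f_e > 0$; (3) $f_e > \tau(e)$, where the bounds pin $d_v - d_u = u(e)$, so $w_e = u(e)$ kills $(u(e)-w_e)b_e$ with $b_e = f_e - \tau(e) > 0$; (4) $f_e = \tau(e)$, where $a_e = b_e = 0$ automatically and the bounds $\ell(e) \le w_e \le u(e)$ give validity; (5) $e \in P$ with $f_e \le 0$, where $\wh E_P$ supplies the reverse bound so the equality-of-path constraint $d_v - d_u = w_e$ is satisfied by construction and $w_e$ lies in $[\ell(e),u(e)]$. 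Finally, the third complementary-slackness term $(w_e - d_v + d_u)f_e$ vanishes trivially: whenever $f_e \ne 0$ (or $e \in P$), the algorithm defines $w_e := d_v - d_u$.

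The main obstacle is really just bookkeeping around edge cases: arcs with $\tau(e) = 0$ (non-pliable, so $\ell(e) = u(e)$ and everything collapses), arcs in $P$ where the reverse residual comes from $\wh E_P$ rather than $\wh E_{res}$, and making sure the constraint $d_v - d_u \le w_e$ holds in the fresh case $w_e = \ell(e)$ even though $w_e$ is not set by a shortest-path identity. Once these cases are enumerated, each verification is a one-line inequality from the Bellman--Ford tree and a check that the corresponding dual variable is zero, so the proof reduces to a routine table.
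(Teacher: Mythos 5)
Your proposal is correct and mirrors the paper's own proof essentially step for step: it uses the Bellman--Ford triangle inequality on the residual graph to bound $d_v - d_u$ from above (via the forward arc) and from below (via the reverse arc, when present), establishes cut-formulation feasibility from those bounds, and then closes the duality gap by termwise complementary slackness organized as a case split on $f_e$ versus $\tau(e)$ and on whether $e \in P$. The only stylistic difference is that you interleave the feasibility and duality-gap checks within a single enumeration of cases, whereas the paper separates them into two passes, but the underlying argument is identical.
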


\begin{proof}

\textbf{Feasibility: $d_v - d_u \le w_e$ and $d_v - d_u = w_e$}: For every arc $e = (u,v)\in E\setminus P$ with $f_e = 0$, $e\in \wh{E}$ and $\wh{\kappa}(e) \ge -\ell(e)$. Thus, by the triangle inequality, $d_v \le d_u - \wh{\kappa}(e) \le d_u + \ell(e) = d_u + w_e$, as desired. For every other arc $e = (u,v)\in E$, $d_v - d_u = w_e$ by definition of $w_e$, as desired.

\textbf{Feasibility: $w_e \ge \ell(e)$ and $w_e \le u(e)$}: For every arc $e = (u,v)\in E\setminus P$ with $f_e = 0$, $w_e = \ell(e) \le u(e)$ by definition, as desired. For every other arc $e = (u,v)\in E$, $e' = (v,u)\in \wh{E}$ because either $e\in P$, in which case $e'\in \wh{E}_P$, or $e\notin P$, in which case $f_e > 0$ and $e'\in \wh{E}_{res}$. Furthermore, $\wh{\kappa}(e) \ge -u(e)$ and $\wh{\kappa}(e') \ge \ell(e)$ for all values of $f_e$. Thus, applying the triangle inequality to both arcs shows that $d_v - d_u \le -\wh{\kappa}(e) \le u(e)$ and $d_u - d_v \le -\wh{\kappa}(e') \le -\ell(e)$. Combining these inequalities and using the fact that $d_v - d_u = w_e$ shows that $\ell(e) \le w_e \le u(e)$, as desired.

\textbf{Duality Gap}: The duality gap is the sum of five terms, as written earlier. Feasibility of the flow formulation solution $(f,a,b)$ guarantees that the first and last terms are equal to 0. The fourth term is equal to 0 because in all cases where $w_e - d_v + d_u = 0$ for an arc $e = (u,v)$, $f_e = 0$. For every arc $e = (u,v)\in E\setminus P$ with $f_e = 0$, $w_e = \ell(e)$ and $b_e = 0$, so the second and third terms of the duality gap are also 0, as desired. For every other arc $e = (u,v)\in E$, $e' = (v,u)\in \wh{E}$ also, as discussed earlier. If $f_e < \tau(e)$, $b_e = 0$, $\wh{\kappa}(e) \ge -\ell(e)$, and $\wh{\kappa}(e') \ge \ell(e)$. Therefore, by the triangle inequality, $w_e = d_v - d_u \le -\wh{\kappa}(e) \le \ell(e)$ and $-w_e = d_u - d_v \le -\wh{\kappa}(e') \le -\ell(e)$, so $w_e = \ell(e)$, as desired. If $f_e = \tau(e)$, then both $a_e = 0$ and $b_e = 0$, as desired. If $f_e > \tau(e)$, $a_e = 0$, $\wh{\kappa}(e) \ge -u(e)$, and $\wh{\kappa}(e') \ge u(e)$. Therefore, by the triangle inequality, $w_e = d_v - d_u \le -\wh{\kappa}(e) \le u(e)$ and $-w_e = d_u - d_v \le -\wh{\kappa}(e') \le -u(e)$, so $w_e = u(e)$, as desired. Thus, in all cases, all five terms are 0, as desired.
\end{proof}

\end{document}